\newcommand{\comment}[1]{}
\newcommand{\sa}{{\mathcal A}}
\newcommand{\sv}{V}
\newenvironment{proof}{\noindent {\bf Proof:}~}{\hspace*{\fill}\(\Box\)}
\newenvironment{proofSketch}{\noindent{\bf Proof Sketch:}}{\hspace*{\fill}\(\Box\)}
\newtheorem{theorem}{Theorem}
\newtheorem{claim}{Claim}
\newtheorem{definition}{Definition}
\newtheorem{lemma}{Lemma}
\def\noflash#1{\setbox0=\hbox{#1}\hbox to 1\wd0{\hfill}}
\newcommand{\vectorv}{{\normalfont\textbf{v}}}
\newcommand{\matrixm}{\textbf{M}}
\newcommand{\vectorl}{{\normalfont\textbf{l}}}
\newcommand{\vectorr}{{\normalfont\textbf{r}}}
\newcommand{\bfB}{{\bf B}}
\newcommand{\D}{{\textbf d_H}}
\newcommand{\LL}{{\textbf L}}
\newcommand{\ddd}{{\textbf d_E}}
\newcommand{\HH}{{\mathcal H}}
\newcommand{\msg}{{\tt MSG}}
\newcommand{\bfA}{{\bf A}}
\newcommand{\bfM}{{\bf M}}
\newcommand{\bfP}{{\bf P}}
\newcommand{\fbar}{{\mathcal F}}
\begin{document}
\title{Asynchronous Convex Consensus in the Presence of Crash Faults \footnote{\normalsize This research is supported in part by National Science Foundation awards 1059540 and 1329681. Any opinions, findings, and conclusions or recommendations expressed here are those of the authors and do not necessarily reflect the views of the funding agencies or the U.S. government.}\footnote{\normalsize A version of this work is published in PODC 2014.}}

\author{Lewis Tseng$^{1}$, and Nitin Vaidya$^{2}$\\~\\
 \normalsize $^1$ Department of Computer Science,\\
 \normalsize $^2$ Department of Electrical and Computer Engineering, 
\\ \normalsize University of Illinois at Urbana-Champaign\\~\\ \normalsize Email: \{ltseng3, 
nhv\}@illinois.edu
\\ \normalsize Mailing address: Coordinated Science Lab., 1308 West Main St., Urbana, IL 61801, U.S.A.} 

\date{February 13, 2014\footnote{\normalsize Modifed August 2015 to add Lemma \ref{lemma:matrix_comp}.}}
\maketitle


\comment{
\noindent
{\bf
\begin{itemize}
\item This is a regular paper submission.
\item If the paper is not selected for a regular presentation, then please consider it as a brief announcement.
\item Lewis Tseng is a full-time Ph.D. student at the University of Illinois\\at Urbana-Champaign. 
\item This paper is eligible for the best student paper award.
\end{itemize}
}
}

~

\begin{abstract}
{\normalsize

This paper defines a new consensus problem, {\em convex consensus}.
Similar to {\em vector consensus} \cite{herlihy_multi-dimension_AA,Vaidya_BVC,Vaidya_incomplete}, the input at each process is 
a $d$-dimensional vector of reals (or, equivalently, a point in the $d$-dimensional Euclidean space).
However, for convex consensus,
the output at each process is 
a {\em convex polytope} contained within the convex hull of the inputs at the
fault-free processes. We explore the {\em convex consensus} problem under crash faults with {\em incorrect} inputs,
and present an {\em asynchronous} approximate convex consensus algorithm
with optimal fault tolerance that reaches consensus on an {\em optimal} output polytope. {\em Convex consensus} can be used to solve other related problems.
For instance, a solution for convex consensus trivially yields a solution for
{\em vector consensus}. More importantly, convex consensus
can potentially be used to solve other more interesting problems, such as {\em convex function optimization} \cite{Boyd_optimization,AA_convergence_markov}.

}
\end{abstract}

\thispagestyle{empty}
\newpage
\setcounter{page}{1}

\section{Introduction}
\label{s_intro}

The distributed {\em consensus} problem has received significant attention over the past three
decades \cite{Welch_textbook}. The traditional consensus problem formulation assumes that each process
has a scalar input.
As a generalization of this problem,
recent work \cite{herlihy_multi-dimension_AA,Vaidya_BVC,Vaidya_incomplete} has addressed 
{\em vector} consensus (also called {\em multidimensional} consensus) in the
presence of Byzantine faults, wherein each process has a $d$-dimensional
vector of reals as input, and the processes reach consensus
on a $d$-dimensional vector within the convex hull of the inputs at fault-free processes ($d\geq 1$).
In the discussion below, it will be more convenient to view a $d$-dimensional vector as
a {\em point} in the $d$-dimensional Euclidean space.

This paper defines the problem of {\em convex consensus}.
Similar to {\em vector consensus}, the input at each process is 
a point in the $d$-dimensional Euclidean space.
However, for convex consensus,
the output at each process is 
a {\em convex polytope} contained within the convex hull of the inputs at the
fault-free processes. Intuitively, the goal is to reach consensus on the ``largest possible''
polytope within the convex hull of the inputs at fault-free processes, allowing the processes
to estimate the domain of inputs at the fault-free processes.
In some cases, the output convex polytope may consist of just a single point, but in general,
it may contain an infinite number of points.

Convex consensus may be used to solve other related problems.
For instance, a solution for convex consensus trivially yields a solution for
{\em vector consensus} \cite{herlihy_multi-dimension_AA,Vaidya_BVC}. More importantly, convex consensus
can potentially be used to solve other more interesting problems, such as {\em convex function optimization} \cite{Boyd_optimization,AA_convergence_markov,Nedic_convex} with the convex hull of the inputs at fault-free
processes as the domain. We will discuss the application of convex consensus to function optimization in Section \ref{s_optimization}.

\paragraph{Fault model:}
With the exception of Section \ref{s_crash_good_inputs}, rest of the paper assumes the
{\em crash faults with incorrect inputs} \cite{Coan_Simulation,Welch_textbook} fault
model. In this model, each faulty process may crash, and may also have an
{\em incorrect input}. A faulty process performs the algorithm faithfully, using possibly
incorrect input, until it (possibly) crashes. The implication of an {\em incorrect input} will be clearer
when we formally define convex consensus below. At most $f$ processes may be faulty.
All fault-free processes have {\em correct inputs}.
Since this model allows incorrect inputs at faulty processes,
the simulation techniques in \cite{Coan_Simulation,Welch_textbook} can be used to transform an
algorithm designed for this fault model to an algorithm for tolerating Byzantine faults. For brevity, we do not discuss this transformation. (A Byzantine convex consnesus algorithm is also presented
in our technical report \cite{Tseng_BCC}.)
Section \ref{s_crash_good_inputs} briefly discusses how our results extend naturally to
the more commonly used {\em crash fault} model wherein faulty processes have correct
inputs (we will refer to the latter model
as {\em crash faults with correct inputs}).\footnote{Our results also easily extend to the case when up to $f$ processes may
crash, and up to $\psi$ processes may have incorrect inputs, with the
set of crashed processes not necessarily being identical to the processes with
incorrect inputs. For brevity, we omit this generalization.}

\paragraph{System model:}
The system under consideration is {\em asynchronous}, and consists of $n$ processes.
 Let the set of processes be denoted as $V=\{1,2,\cdots, n\}$.
 All processes can communicate with each other. Thus, the underlying communication network
 is modeled as a {\em complete graph}.
 Communication channels are reliable and FIFO \cite{AA_Dolev_1986,Coan_Simulation}.
Each message is delivered exactly once on each channel.
The input at process $i$, denoted as $x_i$, is a point
in the $d$-dimensional Euclidean space (equivalently, a $d$-dimensional vector of real numbers).

\paragraph{Convex consensus:}
The FLP impossibility of reaching {\em exact} consensus in asynchronous systems with crash
faults \cite{FLP_one_crash} extends to the problem of convex consensus as well. Therefore,
we consider {\em approximate} convex consensus in our work.
An {\em approximate convex consensus} algorithm must satisfy the
following properties:
\begin{itemize}
\item \textbf{Validity}: The {\em output} (or {\em decision}) at each fault-free process must be
a convex polytope in the convex hull of {\em correct inputs}.
Under the {\em crash fault with incorrect inputs} model, the input at any faulty process may possibly
be {\em incorrect}.

\item \textbf{$\epsilon$-Agreement}: For a given constant $\epsilon > 0$, the {\em Hausdorff distance} (defined below) between the output polytopes at any two fault-free processes must be at most $\epsilon$.

\item {\bf Termination:} Each fault-free process must terminate within a finite amount of time.
\end{itemize}

\paragraph{Distance metrics:}
\begin{itemize}
\item $\ddd(p,q)$ denotes the Euclidean distance between points $p$ and $q$. 
All {\em points} and {\em polytopes} in our discussion belong to a $d$-dimensional Euclidean space, for some $d\geq 1$,
even if this is not always stated explicitly.
\item For two convex polytopes $h_1, h_2$, the {\em Hausdorff distance}\, $\D(h_1,h_2)$ is defined as follows \cite{Hausdorff}.
\begin{equation}
\label{e_hausdorf}
\D(h_1, h_2) ~~=~~ \max~~ \{~~ \max_{p_1 \in h_1}~\min_{p_2 \in h_2} \ddd(p_1, p_2),~~~~ \max_{p_2 \in h_2}~\min_{p_1 \in h_1} \ddd(p_1, p_2)~~\}
\end{equation}
\end{itemize}

\paragraph{Optimality of approximate convex consensus:}
The algorithm proposed in this paper is optimal in two ways. It requires an
optimal number of processes to tolerate $f$ faults, and it decides
on a convex polytope that is optimal in a ``worst-case sense'', as elaborated
below:
\begin{itemize}
\item Prior work on approximate {\em vector} consensus mentioned
earlier \cite{herlihy_multi-dimension_AA, Vaidya_BVC} showed that $n\geq (d+2)f+1$
is necessary to solve that problem in an asynchronous system consisting
of $n$ processes with at most $f$ Byzantine faults. Although these
prior papers dealt with Byzantine faults, it turns out that
their proof of lower bound on $n$ (i.e., lower bound of $(d+2)f+1$)
is also directly applicable to approximate {\em vector} consensus under the
{\em crash fault with incorrect inputs} model used in our
present work.
Thus, $n\geq (d+2)f+1$ is a necessary condition for {\em vector} consensus
under this fault model. Secondly, it is easy to show
that an algorithm for approximate {\em convex} consensus can be transformed
into an algorithm for approximate {\em vector} consensus.
Therefore, $n\geq (d+2)f+1$ is a necessary condition for approximate
{\em convex} consensus as well. For brevity, we omit a formal proof
of the lower bound, and our
subsequent discussion under the {\em crash faults with incorrect inputs}
model assumes that
\begin{eqnarray}
 n &\geq& (d+2)f+1\label{e_n_bound}
\end{eqnarray}
Our algorithm is correct under this condition, and thus achieves
optimal fault resilience.
For crash faults with {\em correct inputs}, a smaller $n$
suffices, as discussed later in Section \ref{s_crash_good_inputs}.

 
\item In this paper, we only consider deterministic algorithms. A
convex consensus algorithm $A$ is said to be optimal if the
following condition is true:

\begin{list}{}{}
\item[]
Let $F$ denote a set of up to $f$ faulty processes.
For a {\bf given execution} of algorithm $A$ with 
$F$ being the set of faulty processes, 
let $y_i(A)$ denote the output polytope at process $i$ at the end
of the given execution.
For any other convex consensus algorithm $B$,
{\bf there exists} an execution
with $F$ being the set of faulty processes, such that 
$y_i(B)$ is the output at fault-free process $i$, and
$y_j(B)\subseteq y_j(A)$ for {\bf each} fault-free process $j$.
\end{list}
The goal here is to decide on an output polytope that includes
as much of the convex hull of {\em all} correct inputs as possible.
However, since any process may be potentially faulty (with incorrect input), the output
polytope can be smaller than the convex hull of {\em all} correct
inputs. Intuitively speaking, the optimality condition says that an optimal
algorithm should decide on a convex region that is
{\em no smaller than that decided in a worst-case execution} of algorithm $B$.
In Section \ref{s_optimal}, we will show that our proposed algorithm
is optimal in the above sense.


\end{itemize}

\paragraph{Summary of main contributions of the paper:}
\begin{itemize}
\item The paper introduces the problem of {\em convex
consensus}. We believe that feasibility of
convex consensus can be used to infer feasibility of solving other
interesting problems as well.


\item We present an approximate convex consensus algorithm in {\em asynchronous} systems,
and show that it achieves optimality in terms of
its resilience, and also in terms of the convex polytope that it
decides on.

\item We show that the convex consensus algorithm can be used to solve a
version of the {\em convex function optimization} problem. 
We also prove an impossibility result pertaining to convex function
optimization with crash faults in asynchronous systems.
\end{itemize}


\paragraph{Related Work:}

For brevity, we only discuss the most relevant prior work here.
Many researchers in the decentralized control area, including Bertsekas and Tsitsiklis \cite{AA_convergence_markov}
and Jadbabaei, Lin and Moss \cite{jadbabaie_concensus}, have explored approximate consensus
{\em in the absence of process faults}, using only near-neighbor communication in systems
wherein the communication graph may
be partially connected and time-varying.
The structure of the proof of correctness of the algorithm presented in this paper,
and our use of well-known matrix analysis results \cite{Wolfowitz},
is inspired by the above prior work.
We have also used similar proof structures in our prior work on other (Byzantine) consensus
algorithms \cite{Tseng_general,Vaidya_BVC}.
With regards to the proof technique, this paper's contribution is to show
how the above proof structure can be extended to the case when the process
state consists of convex polytopes. 

Dolev et al. addressed approximate Byzantine consensus in both synchronous and asynchronous systems \cite{AA_Dolev_1986} (with scalar input). Subsequently, Coan proposed a simulation technique to transform consensus algorithms that are resilient to crash faults into algorithms tolerating Byzantine faults  \cite{Coan_Simulation,Welch_textbook}. Independently, Abraham,
Amit and Dolev proposed an algorithm for approximate Byzantine consensus \cite{abraham_04_3t+1_async}.
As noted earlier,
the recent work of Mendes and Herlihy \cite{herlihy_multi-dimension_AA}
and Vaidya and Garg \cite{Vaidya_BVC} has addressed approximate {\em vector} consensus in the presence of Byzantine faults.
This work has
yielded lower bounds on the number of processes, and algorithms with optimal resilience for asynchronous \cite{herlihy_multi-dimension_AA,Vaidya_BVC} as well as synchronous systems \cite{Vaidya_BVC} modeled as complete graphs.
Subsequent work \cite{Vaidya_incomplete} has explored the vector consensus problem in incomplete graphs.


Mendes, Tasson and Herlihy \cite{herlihy_colorless_async} study the problem of {\em Barycentric} agreement. {\em Barycentric} agreement has some similarity to convex consensus, in that the output of Barycentric agreement is not
limited to a single value (or a single point). 
However, the correctness conditions for Barycentric agreement are different from
those of our convex consensus problem.




\section{Preliminaries}
\label{s_ops}

Some notations introduced in the paper are summarized in Appendix \ref{app_s_notations}.
In this section,
we introduce functions $\HH$, $\LL$, and a communication primitive used in our algorithm.

\begin{definition}
\label{def:hh}
For a multiset of points $X$, $\HH(X)$ is 
the convex hull of the points in $X$.
\end{definition}
A multiset may contain the same element more than once.
\begin{definition}
\label{def:linear_hull}
Function ${\normalfont\LL}$:
Suppose that $\nu$ non-empty convex polytopes $h_1, h_2,\cdots, h_\nu$, and $\nu$ weights $c_1, c_2, \cdots, c_\nu$
are given such that
$0 \leq c_i \leq 1$ and $\sum_{i = 1}^\nu c_i = 1$,
Linear combination of these convex
polytopes, ${\normalfont \LL}([h_1, h_2,\cdots, h_\nu]~;~[c_1, c_2, \cdots, c_\nu])$,
 is defined as follows:
\begin{itemize}
\item 
$p \in {\normalfont \LL}([h_1, h_2,\cdots, h_\nu];[~c_1, c_2,\cdots, c_\nu])$ if and only if
\begin{equation}
\label{eq:linear_hull}
\text{for~} 1\leq i\leq \nu,
\text{~there exists~} p_i \in h_i, ~~\text{such that}~~p = \sum_{1\leq i\leq \nu} c_i p_i
\end{equation}
\end{itemize}
\comment{+++++++++++++++++++++++++++++++++
Function ${\normalfont\LL}$:
Suppose that $\nu$ convex polytopes $h_1, h_2,\cdots, h_\nu$, and $\nu$ constants $c_1, c_2, \cdots, c_\nu$
are given such that
(i) $0 \leq c_i \leq 1$ and $\sum_{i = 1}^\nu c_i = 1$,
and (ii) for $1\leq i\leq\nu$, if $c_i\neq 0$, then $h_i\neq\emptyset$.
Linear combination of these convex
polytopes, ${\normalfont \LL}(h_1, h_2,\cdots, h_\nu;~c_1, c_2, \cdots, c_\nu)$,
 is defined as follows:
\begin{itemize}
\item Let $Q := \{ i ~|~ c_i\neq 0, ~ 1\leq i\leq \nu \}$.
\item 
$p \in {\normalfont \LL}(h_1, h_2,\cdots, h_\nu;~c_1, c_2,\cdots, c_\nu)$ if and only if
\begin{equation}
\label{eq:linear_hull}
\text{for each~} i\in Q,
\text{~there exists~} p_i \in h_i, ~~\text{such that}~~p = \sum_{i \in Q} c_i p_i
\end{equation}
\end{itemize}
++++++++++++++++++++++++++++++++++++}
\end{definition}
Because $h_i$'s above are all convex and non-empty, $\LL([h_1, h_2,\cdots, h_\nu]~;~[c_1, c_2,\cdots, c_\nu])$ is also a
convex non-empty polytope.
 (The proof is straightforward.)
The parameters for $\LL$ consist of two vectors, with the elements
of the first vector being polytopes, and the elements of the second vector
being the corresponding weights in the linear combination.
With a slight abuse of notation, we will 
also specify the vector of polytopes as a multiset --
in such cases,
we will always assign an identical weight to all the polytopes in the multiset,
and hence their ordering is not important.

\paragraph{{\em Stable vector} communication primitive:}
As seen later, our algorithm proceeds in asynchronous rounds. In round 0 of the algorithm, the processes
use a communication primitive called {\em stable vector} \cite{renaming_JACM, herlihy_colorless_async}, to
try to
learn each other's inputs. {\em Stable vector} was originally developed in the context
of Byzantine faults \cite{renaming_JACM,herlihy_colorless_async}.
To achieve its desirable properties (listed below), {\em stable vector} requires
at least $3f+1$ processes, with at most $f$ being Byzantine faulty.
Since the {\em crash fault with incorrect inputs} model is weaker than the Byzantine
fault model, the properties of {\em stable vector} listed below will hold in our context,
provided that $n\geq 3f+1$.
As noted earlier in Section \ref{s_intro},
$n\geq (d+2)f+1$ is a necessary condition for approximate convex consensus in the presence of crash faults
with incorrect inputs. Then, with $d\geq 1$, we have $n\geq 3f+1$, and the properties of stable vector below will
hold.

In round 0 of our algorithm, each process $i$ first broadcasts a message
consisting of the tuple $(x_i,i,0)$, where $x_i$ is process
$i$'s input. In this tuple, 0 indicates the (asynchronous) round index. Process $i$ then waits for the {\em stable vector} primitive
to return a set $R_i$ containing round 0 messages.
We will rely on the following properties of the {\em stable vector} primitive,
which are implied by results proved in prior work \cite{renaming_JACM,herlihy_colorless_async}.
\begin{itemize}
\item {\bf Liveness}:
At each process $i$ that does not crash before the
end of round 0,
{\em stable vector} returns a set $R_i$ containing at least
$n-f$ distinct tuples of the form $(x,k,0)$.

\item {\bf Containment}: For processes $i,j$ that do not crash
before the end of round 0, let $R_i, R_j$ be the set of messages returned to processes $i, j$ by {\em stable vector} in round 0, respectively. Then,
either $R_i \subseteq R_j$ or $R_j \subseteq R_i$.
(Also, by the previous property, $|R_i|\geq n-f$ and $|R_j|\geq n-f$.)
\end{itemize}
A description of the implementation of the {\em stable vector} primitive is omitted for
lack of space. Please refer to \cite{renaming_JACM,herlihy_colorless_async} for more details.  




\section{Proposed Algorithm and its Correctness}
\label{s_alg}

The proposed algorithm, named {\em Algorithm CC}, proceeds in asynchronous rounds.
The input at each process $i$ is named $x_i$. 
The initial round of the algorithm is called round $0$.
Subsequent rounds are named round 1, 2, 3, etc.
In each round $t\geq 0$, each process $i$ computes a state variable $h_i$, which represents a convex polytope in the $d$-dimensional Euclidean space. We will refer to the value of $h_i$ at the {\em end} of the $t$-th round performed by process $i$ as $h_i[t]$, $t\geq 0$. Thus, for $t\geq 1$, $h_i[t-1]$ is the value of $h_i$ at the {\em start} of the $t$-th round at process $i$. The algorithm terminates after $t_{end}$ rounds, where $t_{end}$ is a constant defined later in equation (\ref{e_end}). 
The state $h_i[t_{end}]$ of each fault-free process $i$
at the end of $t_{end}$ rounds is its output (or decision) for the
consensus algorithm.

$X_i$ and $Y_i[t]$ defined on lines 4 and 13 of the algorithm are both {\em multisets}.
A given value may occur multiple times in a multiset.
Also, the intersection in line 5 is over the convex hulls of
the subsets of multiset $X_i$ of size $|X_i| - f$ (note that
each of these subsets is also a multiset).
Elements of $X_i$ are points in the $d$-dimensional Euclidean space,
whereas elements of $Y_i[t]$ are convex polytopes.
In line 14, $Y_i[t]$ specifies the multiset of
polytopes whose linear combination is obtained
using $\LL$; all the weights specified as parameters to $\LL$ here are equal to $\frac{1}{|Y_i[t]|}$. 

\vspace*{8pt}
\hrule

\vspace*{2pt}

\noindent {\bf Algorithm CC:} Steps performed at process $i$ shown below.

\vspace*{4pt}

\hrule

\vspace*{8pt}

\noindent {\bf Initialization:} All sets used below are initialized to $\emptyset$.\\


\noindent {\bf Round $0$ at process $i$:} 	

\begin{itemize}
\item On entering round 0: \hfill 1

\hspace*{0.8in} 
Send message $(x_i,i,0)$ to all the processes \hfill 2

\item When {\em stable vector} returns a set $R_i$: \hfill 3

\hspace*{0.8in} Multiset $X_i := \{\,x \,|\,(x,k,0)\in R_i\}$ \hfill // Note: $|X_i|=|R_i|$ \hfill 4


\hspace{0.8in}
$h_i[0]~:= ~ \cap_{\,C \subseteq X_i,\, |C| = |X_i| - f}~~\HH(C)$ \hfill 5


\hspace*{0.8in} Proceed to Round 1 \hfill 6

\end{itemize}

\noindent {\bf Round $t\geq 1$ at process $i$:} 

\begin{itemize}
\item On entering round $t\geq 1$: \hfill 7

\hspace*{0.8in} $\msg_i[t] := \msg_i[t] \cup (h_i[t-1],i,t)$ \hfill 8

\hspace*{0.8in} Send message $( h_i[t-1], i, t)$ to all the processes \hfill 9 

\item When message $(h, j, t)$ is received from process $j\neq i$  \hfill 10

    \hspace*{0.8in} $\msg_i[t] := \msg_i[t] \cup \{(h, j, t)\}$ \hfill 11

\item When $|\msg_i[t]|\geq n-f$ for the first time: \hfill 12


\hspace*{0.8in} Multiset $Y_i[t]:=\{h~|~ (h,j,t)\in \msg_i[t]\}$
	 \hfill // Note: $|Y_i[t]|=|\msg_i[t]|$ \hfill 13


\hspace*{0.8in} $h_i[t]~ :=~ \LL(~~Y_i[t]~;~[\frac{1}{|Y_i[t]|}, \cdots,\frac{1}{|Y_i[t]|}])$
\hfill 14


\hspace*{0.8in} If $t<t_{end}$, then proceed to Round $t+1$ \hfill 15
\end{itemize}
\hrule

~

\newcommand{\printcomment}[1]{{\bf #1}}

\subsection{Proof of Correctness}
\label{ss_correctness}

The use of matrix representation in our correctness proof below is inspired by the prior work on non-fault-tolerant consensus (e.g., \cite{jadbabaie_concensus,AA_convergence_markov}).
We have also used such a proof structure in our work on Byzantine consensus \cite{Tseng_general,Vaidya_BVC}.
We now introduce more notations (some of the notations are summarized in Appendix \ref{app_s_notations}):
\begin{itemize}
\item For a {\em given} execution of the proposed algorithm, let $F$ denote the {\em actual} set of faulty processes in
that execution. Processes in $F$ may have incorrect inputs, and they may potentially crash.

\item For round $r \geq 0$, let $\fbar[r]$ denote the set of faulty processes that have crashed before sending any round $r$ messages. Note that $\fbar[r]\subseteq \fbar[r+1] \subseteq F$.



\end{itemize}
Proofs of Lemmas \ref{l_progress} and \ref{lemma:J_in_H0}
below are presented in Appendices \ref{a_lemma_progress}, and \ref{a_lemma_J}, respectively.
%
\begin{lemma}
\label{l_progress}
Algorithm CC ensures {\em progress}: (i) all the fault-free processes will eventually progress to round 1; and, (ii)
if all the fault-free processes progress to the start of round $t$, $t \geq 1$, then all the
fault-free processes will eventually progress to the start of round $t+1$.
\end{lemma}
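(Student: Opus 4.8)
The plan is to establish the two claims by showing that the message-counting conditions that gate each round (the stable-vector return in round 0, and the condition $|\msg_i[t]| \geq n-f$ in rounds $t \geq 1$) are always eventually satisfied at every fault-free process. The key structural fact I would rely on is that at most $f$ processes are faulty, so at least $n-f$ processes are fault-free, and fault-free processes execute the algorithm faithfully and never crash.

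For part (i), I would appeal directly to the \textbf{Liveness} property of the \emph{stable vector} primitive stated in the Preliminaries: since $n \geq (d+2)f+1$ with $d \geq 1$ gives $n \geq 3f+1$, the primitive's guarantees hold, so at every fault-free process $i$ (which by definition does not crash before the end of round 0), \emph{stable vector} returns a set $R_i$ with $|R_i| \geq n-f$. Once $R_i$ is returned, process $i$ computes $X_i$, then $h_i[0]$ on line 5, and proceeds to round 1 on line 6. Hence every fault-free process eventually reaches the start of round 1. I should note that the intersection on line 5 is over nonempty convex hulls, so $h_i[0]$ is well-defined; this uses $|X_i| = |R_i| \geq n-f > f$ so the subsets $C$ of size $|X_i|-f$ are nonempty.

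For part (ii), assume all fault-free processes reach the start of round $t$. Each such process $i$ executes lines 7--9, adding its own tuple to $\msg_i[t]$ and broadcasting $(h_i[t-1], i, t)$ to all processes. Since there are at least $n-f$ fault-free processes, each fault-free process sends its round-$t$ message, and because communication channels are reliable, every one of these messages is eventually delivered to every fault-free process $i$ and added to $\msg_i[t]$ on line 11 (or line 8 for $i$'s own message). Therefore $\msg_i[t]$ eventually contains at least $n-f$ distinct tuples, so the condition on line 12 is eventually met; process $i$ then forms $Y_i[t]$ and computes $h_i[t]$ via $\LL$ on line 14, and if $t < t_{end}$ proceeds to round $t+1$ on line 15. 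This establishes that all fault-free processes eventually reach the start of round $t+1$.

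I do not expect a genuine obstacle here, as this is a standard liveness argument for asynchronous round-based algorithms. The one point requiring care is the interaction with the quitting condition: I must confirm that a fault-free process never blocks waiting for messages from more than $n-f$ processes, since up to $f$ faulty processes may crash and never send; the threshold $n-f$ is exactly what makes the wait on lines 3 and 12 safe, and this is where the assumption that at most $f$ processes are faulty (equivalently, at least $n-f$ are fault-free and always responsive) is essential.
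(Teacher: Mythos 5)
Your proof is correct and follows essentially the same route as the paper's own argument: part (i) invokes the \emph{Liveness} property of \emph{stable vector} (valid since $n \geq (d+2)f+1$ and $d \geq 1$ imply $n \geq 3f+1$), and part (ii) uses reliable channels plus the count of one's own message on line 8 together with the $n-f-1$ messages from the other fault-free processes to meet the threshold $|\msg_i[t]| \geq n-f$ on line 12. The only difference is cosmetic: your aside about non-emptiness of the sets $C$ on line 5 is not needed for progress (that issue belongs to Lemma \ref{lemma:J_in_H0}).
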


\begin{lemma}
\label{lemma:J_in_H0}
For each process $i\in V-\fbar[1]$, the polytope $h_i[0]$ is non-empty and convex.
\end{lemma}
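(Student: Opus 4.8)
The plan is to treat convexity and non-emptiness separately, since the former is routine and the latter is where the real content lies. Convexity is immediate: on line~5, $h_i[0]$ is defined as an intersection of sets of the form $\HH(C)$, each of which is a convex hull and hence convex; an arbitrary intersection of convex sets is convex, so $h_i[0]$ is convex.

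For non-emptiness I would first pin down that $X_i$ is large. Since $i \in V - \fbar[1]$, process $i$ sent a round~1 message (line~9), so it completed round~0 and in particular did not crash before the end of round~0. The Liveness property of stable vector then gives $|R_i| \ge n-f$, whence $|X_i| = |R_i| \ge n-f$. Combining this with the standing resilience bound (\ref{e_n_bound}), namely $n \ge (d+2)f+1$, yields $|X_i| \ge (d+1)f+1$. Consequently each subset $C$ in line~5 has size $|X_i|-f \ge df+1 \ge 1$, so each $\HH(C)$ is non-empty to begin with.

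The core step is to show the intersection over all such $C$ is non-empty, for which I would invoke Helly's theorem in $\mathbb{R}^d$: a finite family of convex sets in $\mathbb{R}^d$ has a common point provided every $d+1$ of them do. Here the family is $\{\HH(C) : C \subseteq X_i,\ |C| = |X_i|-f\}$, which is finite. To verify Helly's hypothesis, I would pick any $d+1$ of these subsets $C_1,\dots,C_{d+1}$; each is obtained from $X_i$ by deleting exactly $f$ elements, so the elements deleted by \emph{at least one} of them number at most $(d+1)f$. Since $|X_i| \ge (d+1)f+1 > (d+1)f$, at least one element $p$ of the multiset $X_i$ survives in all of $C_1,\dots,C_{d+1}$; then $p \in C_j \subseteq \HH(C_j)$ for every $j$, so these $d+1$ hulls share the point $p$. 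Helly's theorem then forces the whole family to have a common point, i.e.\ $h_i[0] \ne \emptyset$.

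The main obstacle is precisely this non-emptiness claim, and the crux is the pigeonhole count bounding the jointly-deleted elements by $(d+1)f$ — which is exactly the step where the resilience bound $n \ge (d+2)f+1$ is used. One subtlety I would be careful about is that $X_i$ is a \emph{multiset}: ``deleting $f$ elements'' and ``surviving in all $C_j$'' must be counted with multiplicity, but the inequality $(d+1)f < |X_i|$ is insensitive to this, and the surviving $p$ is a genuine point lying in each convex hull. I would also note the degenerate case $f=0$ separately, where line~5 reduces to $h_i[0] = \HH(X_i)$, which is trivially non-empty and convex.
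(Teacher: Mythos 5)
Your proof is correct, and it establishes non-emptiness by a genuinely different route from the paper's. The paper invokes Tverberg's theorem: since $|X_i|\geq n-f\geq (d+1)f+1$, the multiset $X_i$ admits a partition into $f+1$ non-empty multisets $T_1,\dots,T_{f+1}$ whose hulls have a common non-empty intersection $J$; because each $C$ on line 5 discards only $f$ elements while the partition has $f+1$ parts, every such $C$ wholly contains some $T_l$, so $\HH(C)\supseteq J$ and hence $J\subseteq h_i[0]$. There the pigeonhole is over the $f+1$ parts versus the $f$ deletions; yours is over the elements of $X_i$ themselves (any $d+1$ of the sub-multisets jointly delete at most $(d+1)f<|X_i|$ element-slots), with Helly's theorem then promoting $(d+1)$-wise intersection to intersection of the whole finite family. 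Both arguments consume the resilience bound at exactly the same place, namely $|X_i|\geq(d+1)f+1$, and both must treat deletions with multiplicity, which you do. Your route is arguably more elementary and self-contained, since Helly's theorem is a more standard tool than Tverberg's; what the paper's route buys is a stronger witness, namely the fixed non-empty region $J$ (possibly much more than a single point) contained in $h_i[0]$, though the paper never exploits anything beyond non-emptiness. One formality to patch in yours: Helly's theorem is stated for families with at least $d+1$ members, and when many points of $X_i$ coincide, the number of \emph{distinct} sub-multisets $C$ may be smaller than $d+1$. This costs nothing, because your own pigeonhole count applied to all members simultaneously (at most $(d+1)f$ deletions in total) already produces a common point in that case, but it deserves a sentence alongside your $f=0$ remark.
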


\comment{+++++++++++++++++++++

 Then, we develop a
{\em transition matrix} representation of the proposed algorithm, and use
that to prove its correctness. The proof below follows a similar structure in our prior work on different consensus algorithms in incomplete networks \cite{Tseng_general,Vaidya_incomplete,Tseng_link}; however, such analysis has not been applied in the case of Convex Consensus.

+++++++++++++++++++++}


We now introduce some matrix notation and terminology to be used in our proof.
Boldface upper case letters are used below to denote matrices, rows of matrices, and their elements. For instance, $\bfA$ denotes a matrix, $\bfA_i$ denotes the $i$-th row of matrix $\bfA$, and $\bfA_{ij}$ denotes the element at the intersection of the $i$-th row and the $j$-th column of matrix $\bfA$.
A vector is said to be {\em stochastic} if all its elements
are non-negative, and the elements add up to 1.
A matrix is said to be row stochastic if each row of the matrix is a
stochastic vector \cite{jadbabaie_concensus}. 
For matrix products, we adopt the ``backward'' product convention below, where $a \leq b$,
\begin{equation}
\label{backward}
\Pi_{\tau=a}^b \bfA[\tau] = \bfA[b]\bfA[b-1]\cdots\bfA[a]
\end{equation}
Let $\vectorv$ be a column vector of size $n$ whose elements are convex polytopes. The $i$-th element of $\vectorv$ is $\vectorv_i$. Let $\bfA$ be a $n\times n$ row stochastic square matrix.
 We define the product of $\bfA_i$ (the $i$-th row of $\bfA$)
and $\vectorv$ as follows using function $\LL$ defined 
 in Section \ref{s_ops}.
\begin{eqnarray}
\label{e_r_c}
\bfA_i\vectorv &=& \LL(\vectorv^T\,;~\bfA_i)
\end{eqnarray}
where $^T$ denotes the transpose operation.
The above product is a polytope in the $d$-dimensional Euclidean space.
Product of matrix $\bfA$ and $\vectorv$ is then defined as follows:
\begin{equation}
\label{eq:multiplication}
\bfA \vectorv = [~\bfA_1 \vectorv
~~~~~\bfA_2\vectorv
~~~~~\cdots~~~~~
\bfA_n\vectorv~]^T
\end{equation}
Due to the transpose operation above, the product $\bfA\vectorv$ is a column vector
consisting of $n$ polytopes. Now, we present a useful lemma. The lemma is proved in Appendix \ref{a_matrix}

\begin{lemma}
	\label{lemma:matrix_comp}
	For two $n$-by-$n$ matrices $\bfA$ and $\bfB$, and an $n$-element column vector of $d$-dimensional polytopes $\vectorv$, we have $\bfA (\bfB \vectorv) = (\bfA \bfB) \vectorv$.
\end{lemma}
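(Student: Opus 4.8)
The plan is to reinterpret the row--vector product of \eqn{e_r_c} in terms of Minkowski sums and scalar multiplication of convex polytopes, and then verify associativity purely through the algebra of these two operations. For a convex polytope $h$ and a scalar $c\geq 0$, write $c\,h = \{cp : p\in h\}$, and for polytopes $g,h$ write $g\oplus h = \{p+q : p\in g,\,q\in h\}$ for their Minkowski sum. Directly from Definition \ref{def:linear_hull} and \eqn{e_r_c}, the product $\bfA_i\vectorv = \LL(\vectorv^T;\bfA_i)$ is exactly the weighted Minkowski sum $\bigoplus_{j=1}^n \bfA_{ij}\,\vectorv_j$. (For the invocations of $\LL$ to be well defined per Definition \ref{def:linear_hull}, the matrices $\bfA$ and $\bfB$ must be row stochastic, as they are in the transition-matrix representation of the algorithm; then $\bfA\bfB$ is again row stochastic, so every use of $\LL$ below has nonnegative weights summing to $1$. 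The entries of $\vectorv$ are convex polytopes, as is every polytope produced by $\LL$ per the remark after Definition \ref{def:linear_hull}.)

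First I would record the algebraic properties that the argument needs: $\oplus$ is commutative and associative; scalar multiplication distributes over $\oplus$, that is $c(g\oplus h) = cg \oplus ch$; and scalar multiplication is associative, $c(c'h) = (cc')h$. These three are immediate from the set definitions and require no convexity. The one property that genuinely uses convexity is the collapse identity
\begin{equation}
\label{eq:collapse}
c_1\,h \oplus c_2\,h ~=~ (c_1+c_2)\,h \qquad (c_1,c_2\geq 0,\ h \text{ convex}).
\end{equation}

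Then I would compute the $i$-th component of each side. Writing $(\bfB\vectorv)_j = \bigoplus_k \bfB_{jk}\,\vectorv_k$, distributivity and associativity of scalar multiplication give
\begin{equation}
\bfA_i(\bfB\vectorv) ~=~ \bigoplus_j \bfA_{ij}\Big(\bigoplus_k \bfB_{jk}\,\vectorv_k\Big) ~=~ \bigoplus_j\bigoplus_k (\bfA_{ij}\bfB_{jk})\,\vectorv_k .
\end{equation}
Reordering the commutative, associative sum to group by $k$ turns this into $\bigoplus_k \bigoplus_j (\bfA_{ij}\bfB_{jk})\,\vectorv_k$, in which the inner sum over $j$ is a repeated scalar multiple of the single fixed polytope $\vectorv_k$. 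Applying \eqref{eq:collapse} to that inner sum yields $\bigoplus_k \big(\sum_j \bfA_{ij}\bfB_{jk}\big)\vectorv_k = \bigoplus_k (\bfA\bfB)_{ik}\,\vectorv_k = (\bfA\bfB)_i\vectorv$. Since this holds for every $i$, the two column vectors agree componentwise and the lemma follows.

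The main obstacle is the collapse identity \eqref{eq:collapse}, since it is the only step that genuinely requires convexity and is precisely what lets the double Minkowski sum telescope into the scalar entry $(\bfA\bfB)_{ik}$; I would justify it carefully. The inclusion $(c_1+c_2)h \subseteq c_1h\oplus c_2h$ is immediate by taking $p=q$. The reverse inclusion is where convexity enters: a point $c_1 p + c_2 q$ with $p,q\in h$ equals $(c_1+c_2)$ times the convex combination $\frac{c_1}{c_1+c_2}p + \frac{c_2}{c_1+c_2}q$ (when $c_1+c_2>0$; the case $c_1+c_2=0$ is trivial), which lies in $h$ exactly because $h$ is convex. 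Because the inner sum over $j$ always multiplies the same polytope $\vectorv_k$, this single-polytope form of the identity is all that is needed.
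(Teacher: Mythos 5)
Your proof is correct, but it takes a genuinely different route from the paper's. The paper argues pointwise via two set inclusions: the direction $(\bfA\bfB)_k\vectorv \subseteq \bfA_k(\bfB\vectorv)$ is immediate from scalar associativity on points, while the harder direction takes an arbitrary point $p = \sum_j \bfA_{kj}\bigl(\sum_i \bfB_{ji}p^j_i\bigr)$ of the left side and explicitly constructs witnesses $q_i = \bigl(\sum_j \bfA_{kj}\bfB_{ji}p^j_i\bigr)/\bigl(\sum_j \bfA_{kj}\bfB_{ji}\bigr)$, invoking convexity of $\vectorv_i$ to conclude $q_i\in\vectorv_i$. You instead recast $\LL$ as a weighted Minkowski sum and reduce the whole lemma to semiring-style algebra plus the single collapse identity $c_1h\oplus c_2h=(c_1+c_2)h$; note that your justification of that identity's reverse inclusion (normalizing $c_1p+c_2q$ into a convex combination) is mathematically the same move as the paper's construction of $q_i$, so convexity enters at the identical spot, just packaged once rather than inline. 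Your version buys modularity and robustness: it isolates exactly where convexity is used, and it sidesteps the degenerate case $\sum_j\bfA_{kj}\bfB_{ji}=0$, where the paper's formula for $q_i$ divides by zero and the argument needs a (glossed-over) patch choosing $q_i$ arbitrarily in the nonempty $\vectorv_i$. You also make explicit the row-stochasticity needed for every invocation of $\LL$ to satisfy Definition \ref{def:linear_hull}; the paper states the lemma for arbitrary matrices, though its own proof likewise needs nonnegative entries. What the paper's element-chasing approach buys in exchange is self-containment: it never introduces the Minkowski-sum operation or its algebraic laws, working directly from the definition of $\LL$.
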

 
Now, we describe how to represent Algorithm CC using a matrix form. Let $\vectorv[t]$, $t\geq 0$, denote a column vector of length $n$.
In the remaining discussion,
we will refer to $\vectorv[t]$ as the state of the system at the end of round $t$.
In particular,
$\vectorv_i[t]$ for $i\in V$ is viewed as
the state of process $i$ at the end of round $t$.
We define $\vectorv[0]$ as follows as {\em initialization} of the state vector:
\begin{itemize}
\item[(I1)] For each process $i\in V-\fbar[1]$, $\vectorv_i[0]:=h_i[0]$.
\item[(I2)] Pick any one fault-free process $m\in V-F\subseteq V-\fbar[1]$. For each process $k\in \fbar[1]$,
$\vectorv_k[0]$ is {\em arbitrarily} defined to be equal to $h_m[0]$.
Such an arbitrary choice suffices because the state
$\vectorv_k[0]$ for $k\in \fbar[1]$ does not impact future state
of any other process (by definition, processes in $\fbar[1]$
do not send any messages in round 1 and beyond).
\end{itemize}

We will show that the state evolution can be expressed using
matrix form as in (\ref{matrix:alg1}) below, where
$\matrixm[t]$ is an $n\times n$ matrix with certain desirable properties.
The state $\vectorv_k[t]$ of process $k\in \fbar[t]$ is not
meaningful, since process $k$ has crashed.
However, (\ref{matrix:alg1}) assigns it a value
for convenience of analysis.
$\matrixm[t]$ is said to be the {\em transition matrix} for round $t$.
\begin{equation}
\label{matrix:alg1}
\vectorv[t] = \matrixm[t]~\vectorv[t-1], ~~~~~ t\geq 1
\end{equation}

In particular,
given an execution of the algorithm, we construct the transition matrix
$\bfM[t]$
for round $t\geq 1$ of that execution using the two rules below ({\em Rule 1} and {\em Rule 2}).
Elements of row $\bfM_i[t]$ will determine the state $\vectorv_i[t]$ of process $i$
(specifically, $\vectorv_i[t] = \bfM_i[t]\vectorv[t-1]$).
Note that {\em Rule 1} applies to processes in $V-\fbar[t+1]$. Each process
$i\in V- \fbar[t+1]$
survives at least until the start of round $t+1$, and sends at least one message in round $t+1$.
Therefore, its state $\vectorv_i[t]$ at the end of round $t$ is of consequence. 
On the other hand, processes in $\fbar[t+1]$ crash sometime before 
sending any messages in round $t+1$ (possibly crashing in previous rounds). Thus, their states at the end of round $t$ are not relevant
to the fault-free processes anymore, and hence {\em Rule 2} defines the entries of the corresponding
rows of $\bfM[t]$ somewhat arbitrarily.

\vspace*{8pt}

\hrule 

In the matrix specification below,
$\msg_i[t]$ is the message set at the point where $Y_i[t]$
is defined on line 13 of the algorithm.
Thus, $Y_i[t]:=\{h~|~ (h,j,t)\in \msg_i[t]\}$, and $|\msg_i[t]|=|Y_i[t]|$.

\begin{itemize}
\item {\em Rule 1}: For each process $i \in V-\fbar[t+1]$, and each $k\in V$:

\begin{list}{}{}
\item{} If a round $t$ message from process $k$  (of the form $(*,k,t)$) is in $\msg_i[t]$, then 

\begin{equation}
\label{eq:matrix_i}
\matrixm_{ik}[t]  :=  \frac{1}{|\msg_i[t]|}
\end{equation}

\item{}
Otherwise,
\begin{equation}
\label{eq:matrix_i-2}
\matrixm_{ik}[t]  :=  0
\end{equation}

\end{list}

\item {\em Rule 2}: For each process $j \in \fbar[t+1]$, 
and each $k \in V$,
\begin{eqnarray}
\matrixm_{jk}[t] &:=& \frac{1}{n}
\label{e_fv}
\end{eqnarray}

\end{itemize}
\hrule

~

\noindent

\begin{theorem}
\label{t_M}
For $t\geq 1$,
define $\vectorv[t]=\bfM[t]\vectorv[t-1]$, with $\bfM[t]$ as specified above.
Then, for $\tau\geq 0$, and for all $i\in V-\fbar[\tau+1]$,
$\vectorv_i[\tau]$ equals $h_i[\tau]$.
\end{theorem}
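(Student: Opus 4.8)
The plan is to prove the statement by induction on the round index $\tau$, showing at each step that the matrix recurrence $\vectorv[t]=\bfM[t]\vectorv[t-1]$ reproduces the polytopes that Algorithm CC actually computes. For the base case $\tau=0$, the claim asserts $\vectorv_i[0]=h_i[0]$ for every $i\in V-\fbar[1]$, which is exactly initialization rule (I1); Lemma \ref{lemma:J_in_H0} guarantees that each such $h_i[0]$ is a non-empty convex polytope, so $\vectorv[0]$ is a legitimate argument for the $\LL$-based products that follow.

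For the inductive step, fix $\tau\geq 1$ and assume $\vectorv_k[\tau-1]=h_k[\tau-1]$ for all $k\in V-\fbar[\tau]$ (the statement at index $\tau-1$). Fix any $i\in V-\fbar[\tau+1]$. By \emph{Rule 1}, the row $\bfM_i[\tau]$ carries weight $1/|\msg_i[\tau]|$ on exactly those indices $k$ for which a round-$\tau$ message $(\ast,k,\tau)$ lies in $\msg_i[\tau]$, and weight $0$ elsewhere. Unfolding the matrix--vector product (\ref{e_r_c}) gives $\vectorv_i[\tau]=\bfM_i[\tau]\,\vectorv[\tau-1]=\LL(\vectorv[\tau-1]^T;\bfM_i[\tau])$, i.e.\ the equally weighted $\LL$-combination of the polytopes $\{\vectorv_k[\tau-1]:(\ast,k,\tau)\in\msg_i[\tau]\}$. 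Comparing with line 14, the algorithm sets $h_i[\tau]=\LL(Y_i[\tau];[\tfrac{1}{|Y_i[\tau]|},\dots])$, where $Y_i[\tau]$ is the multiset of polytope fields of the round-$\tau$ messages in $\msg_i[\tau]$. Three observations close the gap: first, the polytope field of a message $(h,k,\tau)$ is the value $h_k[\tau-1]$ that process $k$ sent on line 9; second, any process $k$ contributing a round-$\tau$ message survived to the start of round $\tau$, hence $k\in V-\fbar[\tau]$ and the induction hypothesis yields $h_k[\tau-1]=\vectorv_k[\tau-1]$; third, since each sender contributes exactly one message (tuples are deduplicated by the set union on line 11, and FIFO channels deliver one round-$\tau$ message per sender), we have $|Y_i[\tau]|=|\msg_i[\tau]|$, so the two weight values coincide. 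Thus $Y_i[\tau]$ and $\{\vectorv_k[\tau-1]\}$ agree as weighted multisets and the two $\LL$-combinations are identical, giving $\vectorv_i[\tau]=h_i[\tau]$.

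The step demanding the most care is the bookkeeping that aligns \emph{Rule 1}'s zero/nonzero entries with the multiset $Y_i[\tau]$: one must verify that the indices receiving weight $0$ are precisely the non-senders, so that whatever (possibly meaningless) values $\vectorv_k[\tau-1]$ they hold --- including the states of processes in $\fbar[\tau]$ defined arbitrarily by \emph{Rule 2} or by (I2) --- never enter the combination, while the indices receiving weight $1/|\msg_i[\tau]|$ are exactly the senders covered by the induction hypothesis. A secondary point, more a consistency check than a genuine obstacle, is well-definedness of $\LL$: because $\LL$ maps non-empty convex polytopes to a non-empty convex polytope, the induction hypothesis together with Lemma \ref{lemma:J_in_H0} ensures that every $\vectorv_k[\tau-1]$ entering the combination is a non-empty convex polytope, so the product (\ref{e_r_c}) is always defined and the induction is self-sustaining.
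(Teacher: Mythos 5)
Your proposal is correct and follows essentially the same route as the paper's proof: induction on the round index, with the base case given by initialization rule (I1) and the inductive step matching the nonzero entries of $\bfM_i[\tau]$ under \emph{Rule 1} with the multiset $Y_i[\tau]$ on line 13, using the facts that each received polytope equals the sender's $h_k[\tau-1]$, that each sender lies in $V-\fbar[\tau]$ and is thus covered by the induction hypothesis, and that $|Y_i[\tau]|=|\msg_i[\tau]|$ makes the $\LL$-weights coincide. Your added remarks on the zero-weight entries shielding the arbitrarily defined states and on the well-definedness of $\LL$ are sound consistency checks that the paper leaves implicit.
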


The proof is presented in Appendix \ref{a_t_M}. The above theorem states that, for $t\geq 1$, equation (\ref{matrix:alg1}), that is,
$\vectorv[t]=\matrixm[t]\vectorv[t-1]$, correctly characterizes the state of the processes
that have not crashed before the end of round $t$.
For processes that have crashed, their states are not relevant, and could be assigned
any arbitrary value for analytical purposes (this is what {\em Rule 2} above effectively does).
Given the matrix product definition in (\ref{eq:multiplication}), and by repeated
application of the state evolution equation (\ref{matrix:alg1}) and Lemma \ref{lemma:matrix_comp},
we obtain
\begin{eqnarray}
\vectorv[t] & = & \left(\,\Pi_{\tau=1}^t \matrixm[\tau]\,\right)\, \vectorv[0],
 ~~~~ t\geq 1
\label{e_unroll}
\end{eqnarray}
Recall that we adopt the ``backward'' matrix product convention presented in
(\ref{backward}).

\begin{definition}
\label{def:valid_hull}
A polytope is {\em valid} if it is contained in the convex hull of the inputs of 
fault-free processes. 
\end{definition}

\begin{theorem}
\label{thm:correctness}
Algorithm CC satisfies the {\em validity}, {\em $\epsilon$-agreement} and {\em termination} properties.
\end{theorem}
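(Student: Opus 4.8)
The plan is to prove the three properties separately, disposing of \emph{termination} and \emph{validity} quickly and concentrating the effort on \emph{$\epsilon$-agreement}, where the matrix machinery does the real work. \emph{Termination} is immediate from Lemma~\ref{l_progress}: every fault-free process advances from each round to the next in finite time, and since $t_{end}$ is a fixed finite constant (equation~(\ref{e_end})), each fault-free process completes round $t_{end}$ and outputs $h_i[t_{end}]$ after finitely many steps.

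For \emph{validity} I would induct on the round index $t$ and show that $h_i[t]$ is valid (in the sense of Definition~\ref{def:valid_hull}) for every process $i$ that has not crashed. For the base case, fix $i\in V-\fbar[1]$. The multiset $X_i$ contains at most $f$ incorrect inputs, so at least $|X_i|-f$ of its points are correct; hence there is a size-$(|X_i|-f)$ submultiset $C_0\subseteq X_i$ of correct inputs only, giving $\HH(C_0)\subseteq\HH(\text{correct inputs})$. Since line~5 sets $h_i[0]=\cap_{\,|C|=|X_i|-f}\HH(C)\subseteq\HH(C_0)$, the polytope $h_i[0]$ is valid, and it is non-empty and convex by Lemma~\ref{lemma:J_in_H0}. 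For the inductive step, $h_i[t]$ (line~14) is a weighted average via $\LL$ of the polytopes $h_j[t-1]$ carried in the round-$t$ messages; each such $j$ lies in $V-\fbar[t]$ and so $h_j[t-1]$ is valid by the induction hypothesis. Because any linear combination as in (\ref{eq:linear_hull}) of subsets of a fixed convex set stays inside that convex set, $h_i[t]$ is valid; in particular $h_i[t_{end}]\subseteq\HH(\text{correct inputs})$.

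The substantive part is \emph{$\epsilon$-agreement}, and here I would exploit the transition-matrix representation. The first step is to show that every $\bfM[t]$ is not merely row stochastic but \emph{scrambling}: an active row $\bfM_i[t]$ has its positive entries (each equal to $1/|\msg_i[t]|$) precisely in the $\geq n-f$ columns from which $i$ received round-$t$ messages, while a crashed row is uniformly $1/n$ (Rule~2). Any two rows thus share at least $2(n-f)-n=n-2f$ common positive columns, and by (\ref{e_n_bound}) with $d\geq 1$ we have $n-2f\geq df+1\geq 1$, so for every pair $i,j$ there is a column $k$ with $\bfM_{ik}[t],\bfM_{jk}[t]\geq 1/n$. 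This bounds the Dobrushin coefficient of ergodicity of each $\bfM[t]$ by $1-1/n$; by submultiplicativity of this coefficient across the backward product, the rows of $\bfP:=\Pi_{\tau=1}^{t}\bfM[\tau]$ satisfy $\|\bfP_i-\bfP_j\|_1\leq 2(1-1/n)^{t}$ for all $i,j$.

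Finally I would transfer this matrix contraction to the polytope states, which is the genuinely new ingredient (and the step I expect to be the main obstacle). By (\ref{e_unroll}) and Theorem~\ref{t_M}, a fault-free process $i$ holds $\vectorv_i[t]=\LL(\vectorv[0]^T;\bfP_i)$. I would establish a Lipschitz bound of the form $\D\big(\LL(\vectorv[0]^T;\alpha),\LL(\vectorv[0]^T;\beta)\big)\leq \Omega\,\|\alpha-\beta\|_1$, where $\Omega$ bounds the Euclidean diameter of the bounded region containing every initial polytope $\vectorv_k[0]$ (finite, since the inputs are finitely many points). This bound follows by matching a point $\sum_k\alpha_k p_k$ of the first polytope with $\sum_k\beta_k p_k$ of the second and using $\sum_k(\alpha_k-\beta_k)=0$ to rewrite the gap as $\sum_k(\alpha_k-\beta_k)(p_k-p_0)$ for a fixed reference $p_0$. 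Combining the two bounds gives $\D(\vectorv_i[t],\vectorv_j[t])\leq 2\Omega(1-1/n)^{t}$ for any two fault-free $i,j$, and choosing $t_{end}$ as in (\ref{e_end}) so that $2\Omega(1-1/n)^{t_{end}}\leq\epsilon$ closes the argument. The delicate point is verifying that polytope-valued averaging contracts in Hausdorff distance exactly as scalar averaging does, so that the classical ergodicity argument applies verbatim; confirming the scrambling property uniformly across rounds, including crashed rows and the shrinking active set, is a secondary technicality.
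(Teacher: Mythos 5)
Your proposal is correct and follows essentially the same route as the paper's proof: termination via Lemma \ref{l_progress} and finiteness of $t_{end}$, validity via validity of the initial polytopes plus closure under the stochastic averaging performed by $\LL$, and $\epsilon$-agreement via the scrambling property of the transition matrices (the paper's Lemma \ref{lemma:transition_matrix}), submultiplicativity of an ergodicity coefficient (the paper's Claims \ref{claim_zelta} and \ref{c_lambda_bound}, yielding Lemma \ref{lemma:transition_matrix2}), and the matched-point transfer to Hausdorff distance that the paper carries out in equations (\ref{pi}) and (\ref{pj}). Your variations---direct induction on rounds for validity instead of routing it through row stochasticity of $\bfP[t]$, the Dobrushin $\ell_1$ coefficient instead of Hajnal's entrywise bound, and a diameter-based constant $\Omega$ with the centered rewriting $\sum_k(\alpha_k-\beta_k)(p_k-p_0)$ instead of the paper's coordinate-wise bound $\sqrt{dn^2\max(U^2,\mu^2)}$---are presentational refinements of the same argument (and in fact give a slightly tighter constant), not a different proof.
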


\begin{proofSketch}
Appendix \ref{a_correctness} presents the complete proof.
Repeated application of
Lemma \ref{l_progress} ensures that the fault-free processes will progress
to the end of round $t$, for $t\geq 1$.
By repeated application of Theorem \ref{t_M},
$h_i[t]$ equals the $i$-th element of $(\Pi_{\tau=1}^t \bfM[\tau])\vectorv[0]$, for
$i\in V-\fbar[t+1]$.

\noindent{\normalfont\em Validity:}
By design, $\bfM[\tau]$ is a row stochastic matrix for each $\tau$,
therefore, $\Pi_{\tau=1}^t \bfM[\tau]$ is also row stochastic.
As shown in Lemma \ref{lemma:valid_initial_hull} in Appendix \ref{a_lemmas},
 $\vectorv_i[0]=h_i[0]$ is valid for each fault-free process $i\in V-\fbar[1]$.
Also, for each $k\in \fbar[1]$, in initialization step (I2), we defined $\vectorv_k[0]=h_m[0]$,
where $m$ is a fault-free process. Hence, $\vectorv_k[0]$ is valid for
process $k\in \fbar[1]$. Therefore, all the elements of $\vectorv[0]$ are valid.
This observation, in conjunction with the previous observation that
$\Pi_{\tau=1}^t \bfM[\tau]$ is row stochastic,
and the product definition in 
(\ref{eq:multiplication}), implies that
each element of
$\vectorv[t] = (\Pi_{\tau=1}^\tau \bfM[\tau])\vectorv[0]$
is also valid.
Then, Theorem \ref{t_M}  implies that the state of each fault-free process
is always valid, and hence its output (i.e., its state after $t_{end}$
rounds) meets the validity condition.

\noindent{\normalfont \em $\epsilon$-Agreement and Termination:}
To simplify the termination of the algorithm, we assume that the input at each process
belongs to a bounded space; in particular, each coordinate of $x_i$
is lower bounded by $\mu$ and upper bounded by $U$, where 
$\mu$ and $U$ are known constants.
Let $\bfP[t] = \Pi_{\tau=1}^t\, \bfM[\tau]$.
Then, as shown in Lemma \ref{lemma:transition_matrix2} (Appendix \ref{a_lemmas}),
for $i,j\in V-F$ and $k\in V$,
\begin{eqnarray}
\label{e_P}
\|\, \bfP_{ik}[t] -  \bfP_{jk}[t]\,\| & \leq&  \left(1-\frac{1}{n}\right)^t
\end{eqnarray}
where $\|x\|$ denotes absolute value of a real number $x$.
Recall from previous discussion that, due to Theorem \ref{t_M}, for each fault-free process $i$,
$h_i[t]$ equals the $i$-th element of $\bfP[t]\vectorv[0]$.
This in conjunction with (\ref{e_P}) can be used to prove that
for $i,j\in V-F$, the Hausdorff distance between $h_i[t]$
and $h_j[t]$ is bounded.
In particular, for $i,j\in V-F$,
\[
\D(h_i[t],h_j[t]) <
\left(1-\frac{1}{n}\right)^t~\sqrt{dn^2\max(U^2,\mu^2)}
\]
By defining $t_{end}$ to be the smallest integer satisfying the
inequality below,
$\epsilon$-agreement and termination conditions both follow.
\begin{equation}
\label{e_end}
\left(1-\frac{1}{n}\right)^t~\sqrt{dn^2\max(U^2,\mu^2)} < \epsilon
\end{equation}
\end{proofSketch}

\subsection{Optimality of Algorithm CC}
\label{s_optimal}

Due to the {\em Containment} property of {\em stable vector} mentioned in Section \ref{s_ops},
the set $Z$ defined below contains at least $n-f$ messages. Recall that set $R_i$ is defined
on line 3 of Algorithm CC.
\begin{eqnarray}
Z &:=& \cap_{i\in V-F}~ R_i
\label{e_Z}
\end{eqnarray}
Define multiset $X_Z := \{x~|~(x, k, 0) \in Z\}$. Then, define a convex polytope $I_Z$ as follows.
\begin{eqnarray}
I_Z& := &\cap_{D \subset X_Z, |D| = |X_Z|-f}\, \HH(D)
\label{e_I_Z}
\end{eqnarray}

Now we establish a ``lower bound'' on output at the fault-free processes.

\begin{lemma}
\label{lemma:svSize}
For all $i\in V-\fbar[t+1]$ and $t \geq 0$,
 $I_Z \subseteq h_i[t]$.
\end{lemma}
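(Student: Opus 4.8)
The plan is to prove the claim by induction on the round number $t$, establishing the stronger statement that $I_Z \subseteq h_i[t]$ for every process $i \in V-\fbar[t+1]$, i.e.\ for every process that survives long enough to send a round $t+1$ message. The base case $t=0$ is a statement purely about the input-filtering step on line 5, while the inductive step exploits the fact that line 14 forms a convex combination (via $\LL$) of polytopes that, by the induction hypothesis, all contain $I_Z$. The key structural observation for the step is that if a convex region is contained in each argument of $\LL$, then it is contained in the linear combination as well.

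For the base case I first relate the multiset $X_Z$ to the multiset $X_i$ of each surviving process. By the Containment property of stable vector the returned sets $\{R_j\}_{j\in V-\fbar[1]}$ form a chain under inclusion, and since $Z=\cap_{j\in V-F}R_j$ is the common intersection of the fault-free sets, I would argue that $Z\subseteq R_i$, and hence $X_Z\subseteq X_i$ as multisets, for every surviving process $i\in V-\fbar[1]$. It then remains to establish a monotonicity property of the robust-hull operation $\Phi(X):=\cap_{C\subseteq X,\,|C|=|X|-f}\HH(C)$, namely that $Y\subseteq X$ implies $\Phi(Y)\subseteq\Phi(X)$; applying this with $Y=X_Z$ and $X=X_i$ yields $I_Z=\Phi(X_Z)\subseteq\Phi(X_i)=h_i[0]$. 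I would prove the monotonicity through a separating-hyperplane characterization of $\Phi$: a point $p$ fails to lie in $\Phi(X)$ exactly when some open halfspace avoiding $p$ leaves at most $f$ points of $X$ on $p$'s side, and since deleting points can only decrease the number of points outside a given halfspace, membership in $\Phi$ is preserved as the multiset grows.

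For the inductive step, fix $t\geq 1$ and $i\in V-\fbar[t+1]$, and assume $I_Z\subseteq h_j[t-1]$ for all $j\in V-\fbar[t]$. Every polytope appearing in the multiset $Y_i[t]$ is some $h_j[t-1]$ carried by a round $t$ message $(h,j,t)\in\msg_i[t]$; since only processes that have not crashed before sending a round $t$ message contribute such messages, each contributing $j$ lies in $V-\fbar[t]$, so the induction hypothesis gives $I_Z\subseteq h_j[t-1]$ for every element of $Y_i[t]$. Because $h_i[t]=\LL(Y_i[t];[\tfrac{1}{|Y_i[t]|},\dots,\tfrac{1}{|Y_i[t]|}])$, any point $p\in I_Z$ lies in each of these polytopes, so taking $p_j=p$ for every $j$ in Definition \ref{def:linear_hull} gives $p=\sum_j \tfrac{1}{|Y_i[t]|}\,p\in h_i[t]$; hence $I_Z\subseteq h_i[t]$, completing the induction.

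The main obstacle is the base case, and within it the two containment facts that feed the monotonicity argument: first, that $Z\subseteq R_i$ holds for \emph{every} surviving process $i\in V-\fbar[1]$ and not merely for the fault-free ones, which must be extracted carefully from the Containment property of stable vector (a surviving faulty process with a strictly smaller returned set is exactly the case one must rule out); and second, the monotonicity $\Phi(Y)\subseteq\Phi(X)$ of the robust hull, whose proof is where the geometry genuinely enters, since deleting points must be shown not to create new points of the filtered hull. By contrast, the inductive step is a routine consequence of convexity and the averaging form of line 14.
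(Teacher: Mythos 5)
Your proof is correct, and it reaches the lemma by a route that differs from the paper's in two places. The base case is structurally identical in both: reduce to the containment $Z \subseteq R_i$ (hence $X_Z \subseteq X_i$) for every $i \in V-\fbar[1]$, then apply monotonicity of the operation $X \mapsto \cap_{C\subseteq X,\,|C|=|X|-f}\HH(C)$. You prove monotonicity via a separating-hyperplane characterization of non-membership; the paper (Observation 2 in Appendix \ref{a_l:svSize}) argues it combinatorially --- every sub-multiset of $X_i$ of size $|X_i|-f$ contains a sub-multiset of $X_Z$ of size $|X_Z|-f$, so $I_Z \subseteq \HH(C_A) \subseteq \HH(C_B)$ --- which is shorter and needs no geometry, though your version is equally valid. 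The genuine divergence is the inductive step: the paper does not induct on the algorithm at all, but instead reuses its transition-matrix machinery --- every entry of $\vectorv[0]$ contains $I_Z$ (including the artificial entries assigned to crashed processes by initialization rule (I2)), the product $\Pi_{\tau=1}^t\bfM[\tau]$ is row stochastic, and a stochastic combination of polytopes each containing $I_Z$ again contains $I_Z$; Theorem \ref{t_M} then identifies the entries of $\vectorv[t]$ with the states $h_i[t]$. Your direct round-by-round induction through line 14, taking $p_j=p$ for all $j$ in Definition \ref{def:linear_hull}, is equivalent in substance, more elementary, and self-contained; what the paper's route buys is reuse of machinery it needs anyway for validity and $\epsilon$-agreement, and automatic bookkeeping for processes that crash mid-execution. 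Finally, you are right --- and more candid than the paper --- that $Z\subseteq R_i$ for a surviving \emph{faulty} process $i$ is the delicate point: the stated Containment property only makes $R_i$ comparable to each fault-free $R_j$, which by itself does not exclude $R_i \subsetneq \cap_{j\in V-F}R_j$; the paper's Observation 1 asserts this containment without further justification, so your proposal is no weaker than the paper's own proof at precisely the spot where both are thinnest.
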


Lemma \ref{lemma:svSize} is proved in Appendix \ref{a_l:svSize}.
The following theorem is proved in Appendix \ref{a_t:optSize}.

\begin{theorem}
\label{thm:optSize}
{\em Algorithm CC} is optimal under the notion of optimality in Section \ref{s_intro}.
\end{theorem}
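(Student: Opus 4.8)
The plan is to sandwich the output of Algorithm CC between two bounds: the lower bound $I_Z \subseteq y_j(A)$ furnished by Lemma \ref{lemma:svSize}, and an upper bound $y_j(B) \subseteq I_Z$ that I will force on an adversarially-constructed execution of any competitor $B$. Fixing the given execution of Algorithm CC with faulty set $F$, Lemma \ref{lemma:svSize} at $t=t_{end}$ gives $I_Z \subseteq h_j[t_{end}] = y_j(A)$ for every fault-free $j$, where $I_Z$ is the polytope of (\ref{e_I_Z}). It therefore suffices to exhibit, for each $B$, one execution with the same faulty set $F$ in which every fault-free output lies inside $I_Z$; chaining $y_j(B)\subseteq I_Z \subseteq y_j(A)$ then matches the optimality condition of Section \ref{s_intro} verbatim.

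First I would check that $I_Z$ is non-empty, so that it can actually contain an output. By the Containment property of stable vector the set $Z$ of (\ref{e_Z}) has $|Z|\geq n-f$, so $|X_Z|\geq n-f\geq (d+1)f+1$ under $n\geq(d+2)f+1$; the intersection $I_Z=\cap_{|D|=|X_Z|-f}\HH(D)$ is then non-empty by exactly the safe-area argument that proves $h_i[0]\neq\emptyset$ in Lemma \ref{lemma:J_in_H0}. I then fix any point $p\in I_Z$.

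The competitor execution is now immediate. I run $B$ with faulty set $F$, giving the correct input $p$ to every fault-free process while letting the processes in $F$ take arbitrary incorrect inputs and crash at once. Since $B$ is a correct convex consensus algorithm it terminates and satisfies validity, so each fault-free output is a polytope contained in the convex hull of the correct inputs, i.e.\ in $\HH(\{p\})=\{p\}$. Hence $y_j(B)=\{p\}\subseteq I_Z\subseteq y_j(A)$ for every fault-free $j$, and this single execution handles all $j$ simultaneously.

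The one place that genuinely requires thought---the main obstacle---is the choice of inputs for $B$, not the mechanics. The optimality notion of Section \ref{s_intro} pins down only the faulty set $F$ and leaves the inputs free, and this freedom is indispensable: were $B$ forced onto the same inputs as Algorithm CC, a fault-free process whose own input lay outside $I_Z$ could never be driven to decide inside $I_Z$, since validity lets (and in the worst case makes) its decision reflect its own correct input. Collapsing all correct inputs to a single point of $I_Z$ removes this obstruction cleanly, leaving the substantive content of the theorem---that Algorithm CC always retains the whole of $I_Z$---entirely inside Lemma \ref{lemma:svSize}.
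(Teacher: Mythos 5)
Your lower-bound half is exactly the paper's: Lemma \ref{lemma:svSize} gives $I_Z \subseteq y_j(A)$, and non-emptiness of $I_Z$ follows by the same Tverberg argument as Lemma \ref{lemma:J_in_H0}. The gap is in the competitor half. You construct the execution of $B$ by \emph{changing the inputs} of the fault-free processes (collapsing them all to a single point $p \in I_Z$). The optimality notion of Section \ref{s_intro} is intended to compare $A$ and $B$ on the \emph{same inputs}, with the adversary free only in scheduling and crash behavior; the paper says so explicitly in its discussion of degenerate cases (``any other algorithm can also produce such degenerate outputs \emph{for the same inputs}''), and its own proof of Theorem \ref{thm:optSize} constructs the execution of $B$ with precisely the inputs of the given execution of Algorithm CC. Under your reading the theorem becomes vacuous: for \emph{any} algorithm $A$ whose outputs share a common point --- for instance a vector-consensus algorithm that always decides a single point --- the same trick (feed $B$ that point as everyone's input) would certify $A$ as ``optimal,'' which defeats the stated purpose of deciding on the largest possible polytope. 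So your proof establishes a trivialized statement, not the one the paper intends.

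Moreover, your closing justification --- that input freedom is ``indispensable'' because a fault-free process whose input lies outside $I_Z$ ``could never be driven to decide inside $I_Z$'' --- is false, and the paper's proof shows exactly why. Validity constrains the output to lie in the hull of correct inputs; it does not entitle a process to retain its own input in its decision. With the same inputs, let $V_Z$ be the fault-free processes whose inputs appear in $X_Z$ and $S = V - F - V_Z$ (so $|S| \leq f$). The adversary delays all messages from $S$; since processes in $V_Z$ cannot distinguish slow from crashed, they must terminate having heard only inputs in $X_Z$. Then, for each multiset $D \subset X_Z$ with $|D| = |X_Z| - f$, the view of a process in $V_Z$ is indistinguishable from an execution in which the $f$ processes contributing $X_Z \setminus D$ are exactly the faulty ones with incorrect inputs, so validity forces its output inside $\HH(D)$; intersecting over all such $D$ forces $y_j(B) \subseteq I_Z \subseteq y_j(A)$. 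This indistinguishability argument under fixed inputs is the substantive content of the theorem, and it is what your proposal is missing.
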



\paragraph{Degenerate Cases:}
In some cases, the output polytope at fault-free processes may be a single
point, making the output equivalent to that obtained from
{\em vector consensus} \cite{herlihy_multi-dimension_AA,Vaidya_BVC}.
As a trivial example, this occurs when all the fault-free processes have
identical input.
It is possible to identify scenarios when the number of
processes is exactly equal to the lower bound, i.e.,
$n=(d+2)f+1$ processes, when the output polytope consists of just a single
point. However, in general, particularly when $n$ is larger than the
lower bound, the output polytopes will contain infinite number of points.
In any event, as shown in Theorem \ref{thm:optSize}, our algorithm achieves optimality in all cases.
Thus, any other algorithm can also produce such degenerate outputs for the
same inputs.

\comment{++++++++++++++++++++++
Hence, $p$ is the {\em only} point satisfying the validity property; therefore, the output polytope equals $p$. Now, we show that in general, even if there are several different groups of inputs, the degenerate case can still happen. In particular, we prove the following theorem.

\begin{theorem}
\label{thm:degenerate}
If $n = (d+2)f+1$, then there exists a degenerate case, wherein only a single point satisfies the validity property.
\end{theorem}

\begin{proof}
For each process $i \in V=\{1,2,\cdots,n\}$, denote by $x_i$ its input. Then, we assume that process $i$ is provided an input as follows:

\begin{itemize}
\item For $1 \leq i \leq f+1$: $x_i$ equals to a all-$0$ vector. 

\item For $i = (jf+1)+k$, where $1 \leq j \leq d$ and $1 \leq k \leq f$, the $j$-th element of input vector $x_i$ is $1$, and the remaining $d-1$ elements are $0$.

\item For $i = (d+1)f+1+k$, where $1\leq k \leq f$, $x_i$ equals an arbitrary vector in the $d$-dimensional space.
\end{itemize}

Now, consider the following execution of any approximate convex consensus algorithm . Suppose that the faulty processes does not crash, but the processes in $S = \{(d+1)f+2,(d+1)f+3, \cdots, (d+2)f+1\}$ are so slow that the other fault-free processes in $V-S$ must terminate before receiving any messages from these processes, since the other fault-free processes cannot determine whether the processes in $S$ are just slow,
or faulty (crashed).

Denoted by $R$ the set of round $0$ messages from processes in $V-S$, i.e., $R = \{(x,i,0)~|~i \in V-S\}$. Then define $X_R := \{x~|~(x, i, 0) \in R\}$. Then, define a convex polytope $I_R$ as follows.
\begin{eqnarray}
I_R& := &\cap_{D \subset X_R, |D| = |X_R|-f}\, \HH(D)
\end{eqnarray}

It is easy to see that $I_R$ is a single point, all-$0$ vector. Thus, the proof of Theorem \ref{thm:degenerate} completes.
\end{proof}

Thus, naturally, in the degenerate case, Algorithm CC can only guarantee a single point in the output polytope.

++++++++++++++++++++++++++++++++}

\subsection{Convex Consensus under Crash Faults with Correct Inputs}
\label{s_crash_good_inputs}

With some simple changes, our algorithm and results can be extended
to achieve convex consensus under the {\em crash faults with correct inputs}
model. Under this model, we still need to 
satisfy the $\epsilon$-agreement and termination properties stated in Section \ref{s_intro}. The
validity property remains unchanged as well, however, in this model, inputs
at all processes are always correct. Thus, validity implies that the ouput
will be contained in the convex hull of the inputs at {\em all} the processes.

To obtain the algorithm for convex consensus under the {\em crash faults with correct inputs} model,
three key changes required. First, the lower bound
on the number of processes becomes $n\geq 2f+1$, which is independent of the
dimension $d$. Second, we need a version of the {\em stable vector}
primitive that satisfies the properties stated previously
with just $2f+1$ processes (this is feasible).
Finally, instead of the computation in line 5 of Algorithm CC, the computation
of $h_i[0]$ needs to be modified as $h_i[0]:=\HH(X_i)$, where $X_i := \{\,x \,|\,(x,k,0)\in R_i\}$. With these
changes, the modified
algorithm achieves convex consensus under the crash faults with
{\em correct} input model, with the rest of the proof being similar to
the proof for the crash faults with {\em incorrect} inputs model.
The modified algorithm exhibits optimal resilence as well.

\comment{+++++++++++++++++++++++++
Obviously, Algorithm CC solves convex consensus under crash failures if there are at least $(d+2)f+1$ processes in the system. However, under crash failures, convex consensus only requires $2f+1$ processes for any dimension $d \geq 1$. Now, consider an algorithm, namely Algorithm CC-Simplified (Algorithm CC-S for short), which replaces round $0$ in Algorithm CC by the following steps:

\vspace*{8pt}
\hrule

\vspace*{2pt}

\noindent Round $0$ of {\bf Algorithm CC-Simplified:} Steps performed at process $i$:

\vspace*{4pt}

\hrule

\begin{itemize}
\item Broadcast the message $( x_i, i, 0)$

\item $R_i[0] := \{(x_i,i,0)\}$

\item When message $(x, j, 0)$ is received from process $j\neq i$  

    \hspace*{0.8in} $R_i[0] := R_i[0] \cup \{(x, j, 0)\}$ 
    
\item When $|R_i[t]|\geq n-f$ for the first time: 

\hspace*{0.8in} $h_i[0]~:= ~ \HH(X)$, where multiset $X := \{\,x \,|\,(x,k,0)\in R_i\}$

\hspace*{0.8in} Proceed to Round 1
\end{itemize}

\vspace*{8pt}

\hrule

~

Using similar proofs, we can show that Algorithm CC-S achieves convex consensus under crash failures with correct inputs. Note that since Algorithm CC-S does not rely on stable vector, and the usage of $\HH$ is different, the algorithm does not require $(d+2)f+1$ processes. 

++++++++ The following statement may not be useful. I think we can have some version of stable vector that works under crash failures using only $2f+1$ nodes (two rounds of broadcasts should suffice). But I'm not sure if we need to provide so many details or not. ++++++++++
However, Algorithm CC-S has no guarantee on the size of output polytope. It is left as a future work to explore new algorithm that achieves similar result as in Theorem \ref{thm:optSize} for Algorithm CC under crash failures with correct inputs.

++++++++++++++++++++}

\section{Convex Function Optimization}
\label{s_optimization}

A motivation behind our work on {\em convex consensus} was to develop an algorithm
that may be used to solve a broader range of problems. For instance, {\em vector
consensus} can be achieved by first solving {\em convex consensus}, and then
using the centroid of the output polytope of convex consensus as the output of
{\em vector consensus}.
Similarly,
{\em convex consensus} can be used to solve a {\em convex
function optimization} problem \cite{Boyd_optimization,AA_convergence_markov,Nedic_convex}
under the crash faults with {\em incorrect inputs} model.
We present an algorithm for this, and then discuss some of its
properties, followed by 
an impossibility result of more general interest.
The desired outcome of the function optimization problem
is
to minimize a cost function, say function $c$, over a domain
consisting of the convex hull of the correct inputs.
The proposed algorithm has two simple steps:
\begin{itemize}
\item Step 1: First solve convex consensus with parameter $\epsilon$. Let $h_i$ be the output polytope
of convex consensus at process $i$.
\item Step 2: The output of function optimization is 
the tuple $(y_i,c(y_i))$, where $y_i= \arg\min_{x\in h_i}\,c(x)$.
\end{itemize}
We assume the following
property for some constant $B$: for any inputs $x,y$, $\|c(x)-c(y)\|\leq B\, \ddd(x,y)$ ($B$-Lipschitz continuity).
Then, it follows that,
for fault-free processes $i,j$, $\| c(y_i)-c(y_j)\|\leq \epsilon B$.
Thus, the fault-free processes find approximately
equal minimum value for the function. However,
$c(y_i)$ at process $i$ may not be minimum over the {\em entire}
convex hull of the inputs of fault-free processes. For instance,
even when all the processes are fault-free, each subset of $f$ processes
is viewed as {\em possibly} faulty with incorrect inputs.
The natural question then is ``Is it possible to find
an algorithm that always find a smaller minimum value than the above
algorithms?''
We can extend the notion of optimality from Section \ref{s_intro}
to function optimization in a natural way, and show that no other algorithm
can obtain a lower minimum value (in the worst-case) than the above algorithm.
Appendix \ref{a_optimization} elaborates on this.

The above discussion implies that for any $\beta>0$, we can achieve
$\| c(y_i)-c(y_j)\|< \beta$ by choosing $\epsilon = \beta/B$ for
convex consensus in Step 1.
However, we are not able to similarly show that $\ddd(y_i,y_j)$ is small.
In particular, if there are multiple points on  the {\em boundary} of
$h_i$  that minimize the cost function, then one of the points is chose
arbitrarily as $\arg\min_{x\in h_i}c(x)$, and consensus on the point is not guaranteed. It turns out it is not feasible to simultaneously reach
(approximate) consensus on a point, and to also ensure that the
cost function at that point is ``small enough''.
We briefly present an impossibility
result that makes a more precise statement of this infeasibility.
It can be shown (Appendix \ref{a_optimization}) that
the following four properties cannot be satisfied simultaneously in the presence of up to $f$ crash faults
with incorrect inputs.\footnote{Similar impossibility result can be shown for the
crash fault with {\em correct} inputs model too.}
The intuition behind part (ii) of the weak optimality condition below is as follows.
When $2f+1$ processes have an identical input, say $x^*$, even if $f$ of them are slow (or crash),
each fault-free process must be able to learn that $f+1$ processes have 
input $x^*$, and at least one of these $f+1$ processes must be fault-free.
Therefore, it would know that the minimum value of the cost function over
the convex hull of the correct inputs is at most $ c(x^*)$. Note that our algorithm above achieves weak $\beta$-optimality but not $\epsilon$-agreement.

\begin{itemize}
\item {\bf Validity}: output $y_i$ at fault-free process $i$ is a
point in the convex hull of the correct inputs.

\item \textbf{$\epsilon$-Agreement}: for any constant $\epsilon > 0$, 
for any fault-free processes
$i,j$, $\ddd(y_i,y_j)<\epsilon$.

\item \textbf{Weak $\beta$-Optimality}:
(i)
for any constant $\beta>0$, for any fault-free processes $i,j$,
$\| c(y_i)-c(y_j)\|<\beta$, and
(ii)
if at least $2f+1$ processes (faulty or fault-free)
have an identical input, say $x$,
then for any fault-free process $i$, $c(y_i) \leq c(x)$.

\item {\bf Termination:} each fault-free process must terminate within a finite amount of time.
\end{itemize}
The proof of impossibility for $n\geq 4f+1$ and $d\geq 1$ is presented in Appendix \ref{a_optimization}.
We know that even without the weak $\beta$-optimality, we need $n\geq (d+2)f+1$. 
Thus, the impossibility result is complete for $d\geq 2$.
Whether the impossibility extends to $3f+1\leq n\leq 4f$ and $d=1$ is presently unknown.

\comment{++++++++++++++++++++++++++++++++++++++++++++++++++++++++++++++++++
\begin{theorem}
\label{thm:impossible}
All the four conditions above cannot be satisfied simultaneously
in an asynchronous system in the presence of crash faults (with
or without correct inputs).
\end{theorem}

The proof is presented in Appendix \ref{a_optimization}. The proof
shows that an algorithm satisfying all the four conditions above
can be transformed into an algorithm that achieves {\em exact} consensus
in asynchronous systems in the presence of crash faults, which
contradicts the impossibility result \cite{impossible_proof_lynch}.
\comment{+++++++++++++++++++++++++++++++++
\begin{proof}
We start with the following claim, which is implied by the impossibility result of reaching consensus under crash failures in asynchronous system \cite{impossible_proof_lynch}.

\begin{claim}
\label{claim:consensus_crash}
For $n \geq 5f+1$, it is impossible to achieve consensus with up to $f$ crash failures with incorrect inputs in asynchronous system.
\end{claim}

Now, suppose by way of contradiction that there exists an ACO algorithm ALGO. Then we show a consensus algorithm CON that solves consensus under crash failures with incorrect inputs. Thus, we derive a contradiction to Claim \ref{claim:consensus_crash}.

Consider a system of $5f+1$ processes. Define $G(y) = -(y-\frac{1}{2})^2$. Then, for process $i$ with an input $x_i \in \{0,1\}$, the consensus algorithm CON is as follows:

\begin{itemize}
\item {\em Step 1}: Execute the ACO algorithm ALGO using $x_i$ as the input, $G(y)$ as the function for minimization, and $\epsilon = 0.1$ and $\delta = 0.16$.

\item {\em Step 2}: Denoted by $y_i$ the output of ALGO. Then, decide on $0$ if $y_i$ is in the range $[0, 0.2]$; and decide on $1$ if $y_i$ is in the range $[0.8,1]$.
\end{itemize}

\begin{lemma}
\label{lemma:C_correctness}
Algorithm CON correctly solves the consensus problem.
\end{lemma}

\begin{proof}
First observe that by assumption, at least $\lceil \frac{5f+1-f}{2} \rceil = 2f+1$ fault-free processes will have either input $0$ or $1$. Without loss of generality, suppose that at least $2f+1$ fault-free processes have input $0$.

Fix a fault-free process $i$. By weak $\delta$-Optimality, $G(y_i) \leq G(0)+\delta = -0.25+0.16 = -0.09$. Thus, $y_i$ must be close to either $0$ or $1$ due to the shape of function $G(y)$ (i.e., $y_i$ is in either the range $[0,0.2]$ or $[0.8,1]$). Without loss of generality, suppose that $y_i$ is closer to $0$ (i.e., in the range $[0,0.2]$). Then, process $i$ must decide on $0$ in Step 2 of algorithm CON. Now, consider another fault-free process $j$. It follows $y_j$ can only be in the range $[0,0.2]$, since if $y_j \in [0.8,1]$, then $\epsilon$-Agreement is violated. Hence, process $j$ must also decide on $0$. Therefore, algorithm CON solves the consensus problem.
\end{proof}

Lemma \ref{lemma:C_correctness} and Claim \ref{claim:consensus_crash} lead to a contradiction. Therefore, the statement in Theorem \ref{thm:impossible} is proved.
++++++++ What happens when $n < 5f+1$? This proof doesn't say anything about that ++++++++
\end{proof}
++++++++++++++++++++++++++++++++++}
%
%
Intuitively, Theorem \ref{thm:impossible} says that both $\epsilon$-Agreement and any form of optimality that implies Weak $\beta$-Optimality cannot be achieved at the same time. Now, we present another version of ACO problem, wherein only the optimal value is important, i.e., the agreement property is discarded.

\noindent
{\bf ACO Problem Formulation - 2:}

Given a global convex function $G(y)$ over the $d$-dimensional space, Algorithm $A$ solves the ACO problem if each fault-free process $i$ decides on a point $y_i$ such that the following properties are satisfied.

\begin{itemize}
\item {\bf Validity}: the output $y_i$ at each fault-free process $i$ is in the ``correct'' domain.

\item \textbf{Optimality}: Let $F$ denote a set of up to $f$ faulty processes. For a {\bf given execution} of algorithm $A$ with 
$F$ being the set of faulty processes, 
let $y_i(A)$ denote the output at process $i$ at the end
of the given execution.
{\bf There exists} an execution of any other algorithm $B$,
with $F$ being the set of faulty processes, such that 
$y_i(B)$ is the output at fault-free process $i$, and
$G(y_j(A)) \leq G(y_j(B))$ for {\bf each} fault-free process $j$.

\item {\bf Termination:} each fault-free process must terminate within a finite amount of time.
\end{itemize}

++++++++ This formulation doesn't seem interesting.... +++++++

Now, we present an algorithm for ACO Problem Formulation - 2.

\begin{itemize}
\item Execute Algorithm CC using $x_i$ as an input. Denote by $h_i$ the convex hull returned by Algorithm CC.

\item Define $y_i~:=~\arg\min_{j~\in~h_i} G(j)$. If there are multiple $j$'s minimizing $G(j)$, then break tie arbitrarily.
\end{itemize}

Validity and termination properties follow directly from the correctness of Algorithm CC. Optimality also follows from Theorem \ref{thm:optSize}.

++++++++++++++ +++++++  Lewis' note ++++++++++++++ +++++++ +++++++ 

In problem formulation 2, $y_i$ and $y_j$ can be very far apart. Would it be more interesting to have some kind of set-agreement? For example, the outputs at $i$ and $j$ are a set of points $Y_i$ and $Y_j$ such that each point in the set minimizes $G(y)$ within a region of $\delta$. Then the set-agreement would be either $Y_i \subseteq Y_j$ or $Y_j \subseteq Y_i$. This can be done by having one extra round to exchange potential output values using stable vector. The algorithm would look like: 

\begin{itemize}
\item Execute Algorithm CC using $x_i$ as an input. Denote by $h_i$ the convex hull returned by Algorithm CC.

\item Define $temp_i$ as the set of all points in $h_i$ that minimizes $G(y)$. 

\item Using stable vector to send $temp_i$. Denote by $R_i$ the set returned by stable vector.

\item Decide on $Y_i := \{y~|~y\in temp~~\text{for some}~~temp \in R_i\}$. 
\end{itemize}

++++++++++++++ +++++++  Lewis' note ++++++++++++++ +++++++ +++++++ 
+++++++++++++++++++++++++++++++++++++++++++++++++++++++++++++++++++++++++++++}

\section{Summary}

We introduce the {\em convex consensus} problem under crash faults with incorrect inputs,
and present an asynchronous approximate convex consensus algorithm
with optimal fault tolerance that reaches consensus on an {\em optimal} output polytope.
We briefly extend the results to the {\em crash faults with correct inputs} model, and also use the convex
consensus algorithm to solve convex function optimization. An impossibility result for
asynchronous function optimization
is also presented.


\clearpage
\appendix

\section{Notations}
\label{app_s_notations}

This appendix summarizes some of the notations and terminology introduced throughout the paper.

\begin{itemize}
\item $d = $ dimension of the input vector at each process.
\item $n =$ number of processes. We assume that $n\geq (d+2)f+1$.
\item $f =$ maximum number of faulty processes.
\item $\sv = \{1, 2, \cdots, n\}$ is the set of all processes.
\item $\ddd(p, q) = $ Euclidean distance between points $p$ and $q$.
\item $\D(h_1, h_2) = $ the Hausdorff distance between convex polytopes $h_1, h_2$.
\item $\HH(C) = $ the convex hull of a multiset $C$.
\item $\LL([h_1, h_2, \cdots, h_k];~[c_1, c_2, \cdots, c_k])$, defined in Section \ref{s_ops},
is a linear combination of convex polytopes $h_1, h_2, ..., h_k$ with weights $c_1,c_2,\cdots,c_k$, respectively.
\item $| X | = $ the size of a {\em multiset} or {\em set} $X$.
\item $\| a \| = $ the absolute value of a real number $a$.
\item $F$ denotes the {\em actual} set of faulty processes in an execution of the algorithm.
\item $\fbar[t]~~(t\geq 0)$, defined in Section \ref{s_alg}, denotes the set of (faulty) processes
that do not send any messages in round $t$. Thus, each process in $\fbar[t]$ must have
crashed before sending any message in round $t$ (it may have
possibly crashed in an earlier
round).
Note that $\fbar[r]\subseteq \fbar[r+1]\subseteq F$ for $r\geq 1$.
\item 
We use boldface upper case letters to denote matrices, rows of matrices, and their elements. For instance, $\bfA$ denotes a matrix, $\bfA_i$ denotes the $i$-th row of matrix $\bfA$, and $\bfA_{ij}$ denotes the element at the intersection of the $i$-th row and the $j$-th column of matrix $\bfA$.

\end{itemize}

\section{Proof of Lemma \ref{l_progress}}
\label{a_lemma_progress}

\noindent
{\bf Lemma \ref{l_progress}:}
{\em
Algorithm CC ensures {\em progress}: (i) all the fault-free processes will eventually progress to round 1; and, (ii)
if all the fault-free processes progress to the start of round $t$, $t \geq 1$, then all the
fault-free processes will eventually progress to the start of round $t+1$.\\
}

\begin{proof}

\paragraph
{\bf Part (i):}

By assumption, all fault-free processes begin the round 0 eventually, and perform a broadcast of their input (line 1).
There at least $3f+1$ processes as argued in Section \ref{s_ops},
and at most $f$ may crash.
The Liveness property of
{\em stable vector} ensures that it will eventually return (on line 3).
Therefore, each process that does not crash in round 0 will eventually proceed to round 1 (line 6).

~
\paragraph
{\bf Part (ii):}

The proof is by induction.
Suppose that the fault-free processes begin round $t\geq 1$. (We already
proved that the fault-free processes begin round 1.) 
Thus, each fault-free process $i$ will perform a broadcast
of $(h_i[t-1],i,t)$ on line 9.
By the assumption of reliable channels,
process $i$ will eventually receive message $(h_j[t-1],j,t)$ from each fault-free process $j$. Thus, it will receive messages from at least $n-f-1$
{\em other} processes, and include these received messages in $\msg_i[t]$ (line 10-11).
Also, it includes (on line 8) its own message into
$\msg_i[t]$. Thus, $\msg_i[t]$ is sure to reach size $n-f$ eventually,
and process $i$ will be able to progress to round $t+1$ (line 12-15).
\end{proof}

\section{Proof of Lemma \ref{lemma:J_in_H0}}
\label{a_lemma_J}

The proof of Lemma \ref{lemma:J_in_H0} uses the following theorem by Tverberg \cite{tverberg}:
\begin{theorem}
\label{thm:tverberg}
(Tverberg's Theorem \cite{tverberg}) For any integer $f \geq 0$, for every multiset $T$ containing at least $(d+1)f+1$ points in a $d$-dimensional space, there exists a partition
$T_1, .., T_{f+1}$ of $T$ into $f+1$ non-empty multisets such that $\cap_{l=1}^{f+1} \HH(T_l) \neq \emptyset$.
\end{theorem}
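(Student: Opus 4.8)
The plan is to recognize the statement as the classical Tverberg theorem in the case of $r = f+1$ parts, and to prove it via K.~Sarkaria's tensor-product argument, which reduces everything to the Colorful Carath\'eodory theorem. First I would dispose of the trivial case $f=0$ (where $r=1$ and the single part $T_1=T$ has $\HH(T)\neq\emptyset$), and then reduce to exact cardinality: it suffices to treat $|T| = (d+1)f+1$, since given a partition of any sub-multiset of exactly $(d+1)f+1$ points whose parts share a common point $p$, the leftover points of $T$ may be appended to arbitrary parts. Adding points only enlarges a convex hull, so $p$ remains in every $\HH(T_l)$ and every part stays non-empty. Hence assume $T = \{x_1,\dots,x_N\}$ with $N=(d+1)f+1$, repetitions allowed.

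Next I would set up the tensor construction. Lift each point to $\bar{x}_i = (1, x_i)\in\mathbb{R}^{d+1}$, and fix vectors $b_1,\dots,b_r\in\mathbb{R}^{r-1}=\mathbb{R}^{f}$ that are the vertices of a simplex centered at the origin, so that $\sum_{j=1}^r b_j = 0$ and their only linear dependency is the multiples of $(1,\dots,1)$. For each index $i$, form the color class
\[
C_i = \{\, \bar{x}_i \otimes b_j ~:~ 1 \le j \le r \,\} \subseteq \mathbb{R}^{(d+1)f}.
\]
Because $\sum_j \bar{x}_i \otimes b_j = \bar{x}_i \otimes \big(\sum_j b_j\big) = 0$, the origin lies in the convex hull of each $C_i$. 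There are exactly $N = (d+1)f + 1$ color classes inside the $(d+1)f$-dimensional tensor space, so the Colorful Carath\'eodory theorem yields a choice of color $j(i)\in\{1,\dots,r\}$ for each $i$ together with weights $\lambda_i \ge 0$, $\sum_i \lambda_i = 1$, satisfying $\sum_i \lambda_i\, \bar{x}_i \otimes b_{j(i)} = 0$.

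I would then extract the partition. Put $T_j := \{x_i : j(i) = j\}$ and $y_j := \sum_{i : j(i)=j} \lambda_i \bar{x}_i \in \mathbb{R}^{d+1}$, so the identity becomes $\sum_{j=1}^r y_j \otimes b_j = 0$. Reading this coordinate by coordinate in the $\mathbb{R}^{d+1}$ factor, each coordinate vector $\big((y_1)_k,\dots,(y_r)_k\big)$ is a linear dependency of the $b_j$, hence a multiple of $(1,\dots,1)$, which forces $y_1 = \cdots = y_r =: y$. The lifted (first) coordinate of $y_j$ equals $\sum_{i\in T_j}\lambda_i$, so these partial weights agree across $j$ and sum to $1$, whence each equals $1/r > 0$; in particular every $T_j$ is non-empty. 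Normalizing the remaining $d$ coordinates, the point $p := \big(\sum_{i\in T_j}\lambda_i x_i\big)/\big(\sum_{i\in T_j}\lambda_i\big)$ is a convex combination of points of $T_j$ and is independent of $j$, so $p \in \bigcap_{j=1}^{f+1}\HH(T_j)$, as required.

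The main obstacle is the Colorful Carath\'eodory theorem, which carries the genuine content; I would either cite it or prove it by the standard minimal-distance exchange argument: among all rainbow transversals, choose one whose convex hull is closest to the origin, and show that if the origin lay outside, swapping a single point of some color would strictly decrease that distance, a contradiction. The only other delicate point is the linear-algebra step converting the tensor identity back into a common intersection point, which relies precisely on $(1,\dots,1)$ being the unique dependency among the $b_j$; this is routine but must be stated carefully.
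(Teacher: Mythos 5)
Your proof is correct, but it is worth noting that the paper itself does not prove this statement at all: Theorem~\ref{thm:tverberg} is quoted as a known result with a citation to the literature (the Perles--Sigron version, which handles multisets), and is used only as a black box in the proof of Lemma~\ref{lemma:J_in_H0}. So you have supplied a complete argument where the paper supplies none. Your route is the standard Sarkaria tensor reduction, and the details check out: the reduction to $|T|=(d+1)f+1$ via appending leftover points is sound since enlarging a part only enlarges its hull; the dimension count is right (the $N=(d+1)f+1$ color classes $C_i$ live in $\mathbb{R}^{(d+1)f}$, which is exactly the threshold for Colorful Carath\'eodory); the origin lies in each $\mathrm{conv}(C_i)$ because $\sum_j b_j = 0$; and the back-translation correctly exploits that the unique dependency among the $b_j$ is $(1,\dots,1)$, forcing $y_1=\cdots=y_r$. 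The lifted coordinate argument showing each part carries total weight exactly $1/(f+1)>0$ is the right way to get non-emptiness of every $T_j$, and since points enter the construction only through their index $i$, repeated points in the multiset cause no difficulty. The one place where the genuine mathematical content sits is Colorful Carath\'eodory, which you correctly flag; the minimal-distance exchange argument you sketch (choose a rainbow transversal whose hull is nearest the origin, and swap within the color of a positive-weight point to contradict minimality) is the standard complete proof. In short: a correct, self-contained proof of a result the paper treats as imported, at the cost of roughly a page of tensor bookkeeping that the paper avoids by citation.
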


~

\noindent
Now, we prove Lemma \ref{lemma:J_in_H0}. \\

\noindent{\bf Lemma \ref{lemma:J_in_H0}:}
{\em
 For each process $i\in V-\fbar[1]$, the polytope $h_i[0]$ is non-empty and convex. \\
}

\begin{proof}

Consider any $i\in V-\fbar[1]$.
Consider the computation of polytope $h_i[0]$ on line 5 of the algorithm as
\begin{equation}
\label{e_h_appendix}
 h_i[0]~:=~ \cap_{\,C \subseteq X_i,\, |C| = |X_i| - f}~~\HH(C),
\end{equation}
 where $X_i := \{\,x \,|\,(x,k,0)\in R_i\}$ (lines 4-5).
Convexity of $h_i[0]$ follows directly from (\ref{e_h_appendix}), because $h_i[0]$ is an
intersection of convex hulls.

Recall that, due to the lower bound on $n$ discussed in Section \ref{s_intro},
we assume that $n \geq (d+2)f+1$.
Thus, $|X_i| \geq n-f \geq (d+1)f+1$. By Theorem \ref{thm:tverberg} above, there exists a partition $T_1, T_2, \cdots, T_{f+1}$ of $X_i$ into multisets ($T_j$'s) such that $\cap_{j=1}^{f+1} \HH(T_j) \neq \emptyset$. Let us define
\begin{equation}
\label{eq:J}
J = \cap_{j=1}^{f+1} \HH(T_j)
\end{equation}
Thus, by Tverberg's theorem above, $J$ is non-empty.
Now, each multiset $C$ used in (\ref{e_h_appendix}) to compute $h_i[0]$
 excludes only $f$ elements of $X_i$, whereas there are $f+1$ multisets
in the partition $T_1,T_2,\cdots,T_{f+1}$ of multiset $X_i$.
Therefore, each multiset $C$ will fully contain at least one multiset
from the partition. It follows that $\HH(C)$
will contain $J$ defined above. Since this property holds true for each multiset
$C$ used to compute $h_i[0]$,
$J$ is contained in
the convex polytope $h_i[0]$ computed as per (\ref{e_h_appendix}).
Since $J$ is non-empty, $h_i[0]$ is non-empty.

\end{proof}

\section{Proof of Lemma \ref{lemma:matrix_comp}}
\label{a_matrix}

Here, we prove Lemma \ref{lemma:matrix_comp}.

\noindent {\bf Lemma \ref{lemma:matrix_comp}} 	{\em For two $n$-by-$n$ matrices $\bfA$ and $\bfB$, and an $n$-element column vector of $d$-dimensional polytopes $\vectorv$, we have $\bfA (\bfB \vectorv) = (\bfA \bfB) \vectorv$.}

~

\begin{proof}
	Let $\vectorl = \bfA (\bfB \vectorv)$ and $\vectorr = (\bfA \bfB) \vectorv$. To prove the lemma, we show that for $1 \leq k \leq n$, $\vectorl_k = \vectorr_k$. 
	
	We first show the following claim for an $n$-element column vector of $d$-dimensional points $p$.
	
	\begin{claim}
		\label{claim:matrix_comp}
		$\bfA_k (\bfB p) = (\bfA \bfB)_k p$
	\end{claim}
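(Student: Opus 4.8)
The plan is to observe that, because $p$ is a vector of \emph{points} rather than general polytopes, every product appearing in the claim collapses to an ordinary linear combination of points in $\mathbb{R}^d$. Specializing the product definition (\ref{e_r_c}) to a row $\bfA_k$ and a column of singletons, the function $\LL$ returns a single point, namely $\bfA_k p = \sum_{j=1}^n \bfA_{kj}\, p_j$. Hence both sides of the claim are single points, and it suffices to verify their equality as vectors in $\mathbb{R}^d$, where the usual associativity and distributivity laws are available coordinate-by-coordinate.

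First I would expand the inner product using (\ref{eq:multiplication}): the $j$-th entry of $\bfB p$ is the point $(\bfB p)_j = \bfB_j p = \sum_{l=1}^n \bfB_{jl}\, p_l$. Applying $\bfA_k$ to the resulting column of points and expanding then gives
\begin{align*}
\bfA_k(\bfB p) &= \sum_{j=1}^n \bfA_{kj}\,(\bfB p)_j = \sum_{j=1}^n \bfA_{kj} \sum_{l=1}^n \bfB_{jl}\, p_l \\
&= \sum_{l=1}^n \Big(\sum_{j=1}^n \bfA_{kj}\,\bfB_{jl}\Big) p_l = \sum_{l=1}^n (\bfA\bfB)_{kl}\, p_l = (\bfA\bfB)_k\, p,
\end{align*}
where the middle step interchanges the two finite sums and regroups the scalar coefficients, and the penultimate step recognizes $\sum_{j} \bfA_{kj}\bfB_{jl}$ as the $(k,l)$ entry of $\bfA\bfB$. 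This establishes the claim.

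The only point that needs care—and the reason this claim is isolated before the full Lemma—is the justification that $\LL$ applied to singleton point-sets coincides with the ordinary weighted sum $\sum_j \bfA_{kj} p_j$; once that is granted, the interchange of finite summations and the regrouping of coefficients are routine vector-space manipulations in $\mathbb{R}^d$. I expect no genuine obstacle at this level: the real difficulty is deferred to lifting this point-level identity to arbitrary polytopes $\vectorv$ in Lemma \ref{lemma:matrix_comp} itself, where the product of a row and a column is a \emph{set} of points and equality of the two sides must be argued by mutual inclusion (choosing a representative point of each polytope and invoking Claim \ref{claim:matrix_comp}) rather than by direct computation. If one prefers to remain strictly within the domain of $\LL$, one further notes that when $\bfA$ and $\bfB$ are row stochastic so is $\bfA\bfB$, so each coefficient vector appearing above is a legitimate weight vector.
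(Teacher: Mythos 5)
Your proof is correct and follows essentially the same route as the paper's: expand $\bfA_k(\bfB p)$ as a double sum, interchange the finite summations, regroup the scalar coefficients, and recognize $\sum_j \bfA_{kj}\bfB_{jl}$ as $(\bfA\bfB)_{kl}$. Your additional remark that $\LL$ applied to singleton points reduces to an ordinary weighted sum is a justification the paper leaves implicit, but the core computation is identical.
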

	
	\begin{proof}
		\begin{align*}
		\bfA_k (\bfB p) &= \sum_{i = 1}^n \bfA_{ki} \left( \sum_{j = 1}^n  \bfB_{ij} p_j \right)\\
		&= \sum_{j = 1}^n \left( \sum_{i=1}^n \bfA_{ki} \bfB_{ij} \right) p_j \\
		&= (\bfA \bfB)_k p
		\end{align*}
	\end{proof}

	\begin{claim}
		\label{claim:rinl}
		For any integer $k$ such that $1 \leq k \leq n$, a point $p \in \vectorr_k$, then $p \in \vectorl_k$.
	\end{claim}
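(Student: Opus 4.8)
The plan is to prove the inclusion $\vectorr_k \subseteq \vectorl_k$ directly from the definition of the polytope product $\LL$, exploiting the fact that a point of $\vectorr_k$ is built from a \emph{single} representative chosen out of each component polytope of $\vectorv$. Concretely, I would start by unwinding the definitions: suppose $p \in \vectorr_k = (\bfA\bfB)_k\,\vectorv$. By the product definition (\ref{e_r_c}) together with the definition of $\LL$ in (\ref{eq:linear_hull}), there exist points $q_j \in \vectorv_j$ for $1 \leq j \leq n$ such that $p = \sum_{j=1}^n (\bfA\bfB)_{kj}\,q_j$. I collect these into a single column vector of $d$-dimensional points $q = [\,q_1 ~ q_2 ~ \cdots ~ q_n\,]^T$, with the crucial feature that $q_j$ is chosen once and used uniformly, independent of any row index.

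Next I would exhibit the corresponding representatives for $\vectorl_k$. Writing $\vectorw = \bfB\vectorv$ (so that $\vectorl = \bfA\vectorw$ and hence $\vectorl_k = \bfA_k\vectorw$), I define the ordinary matrix–vector product on points $r := \bfB q$, so that $r_i = \sum_{j=1}^n \bfB_{ij}\,q_j$ for each $i$. Because each $q_j$ lies in $\vectorv_j$ and the row $\bfB_i$ is stochastic, the definition of $\LL$ in (\ref{eq:linear_hull}) shows $r_i \in \bfB_i\vectorv = \vectorw_i$. Thus $r$ is a legitimate selection of one representative from each component polytope $\vectorw_i$.

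The remaining step is purely the algebraic identity $\sum_{i=1}^n \bfA_{ki}\,r_i = p$, and here I would simply invoke the point-wise associativity already established in Claim~\ref{claim:matrix_comp}. Applying it to the point vector $q$ gives $\bfA_k(\bfB q) = (\bfA\bfB)_k\,q$, whose left-hand side is $\sum_i \bfA_{ki}\,r_i$ and whose right-hand side is $\sum_j (\bfA\bfB)_{kj}\,q_j = p$. Since $\bfA_k$ is stochastic and each $r_i \in \vectorw_i$, the definition of $\LL$ then places $p = \sum_i \bfA_{ki}\,r_i$ inside $\bfA_k\vectorw = \bfA_k(\bfB\vectorv) = \vectorl_k$, which is exactly the desired conclusion $p \in \vectorl_k$.

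I do not expect a genuine obstacle in this direction. The only point requiring care is that the representatives $q_j$ are fixed across all rows $i$ of $\bfB$; this uniformity is precisely what lets a point of $\vectorr_k$ be realized inside $\vectorl_k$, and no convexity of the $\vectorv_j$ is needed. The genuinely harder work is reserved for the reverse inclusion $\vectorl_k \subseteq \vectorr_k$ (presumably the next claim), where a point of $\vectorl_k$ may use a \emph{different} representative $s_j^{(i)} \in \vectorv_j$ for each $i$; collapsing these into a single $q_j$ per coordinate will require averaging them with the nonnegative, suitably normalized weights $\bfA_{ki}\bfB_{ij}$ and invoking convexity of $\vectorv_j$.
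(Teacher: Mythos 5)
Your proof is correct and follows essentially the same route as the paper's: unpack $p \in \vectorr_k$ via representatives $q_j \in \vectorv_j$, apply the point-wise associativity of Claim~\ref{claim:matrix_comp} to get $p = \bfA_k(\bfB q)$, and conclude membership in $\vectorl_k$ from the definition of the polytope product. The only difference is that you make explicit the intermediate observation that each $r_i = \sum_j \bfB_{ij} q_j$ lies in $(\bfB\vectorv)_i$, which the paper compresses into the phrase ``due to our definition of matrix operation over polytopes''; your anticipation of where convexity is genuinely needed (only in the reverse inclusion) also matches the paper.
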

	
	\begin{proof}
		By the definition of $p$, there exists a vector of $d$-dimensional points $q$ such that (i) $(\bfA \bfB)_k q = p$; and (ii) for $1 \leq j \leq n$, $q_j \in \vectorv_j$. Observe that $p = (\bfA \bfB)_k q = \bfA_k (\bfB q)$ due to Claim \ref{claim:matrix_comp}. This implies that $p \in \vectorl_k$ due to our definition of matrix operation over polytopes.
	\end{proof}
	
	Claim \ref{claim:rinl} implies that
	
	\begin{equation}
	\label{eq:rinl}
	\vectorr_k \subseteq \vectorl_k
	\end{equation}
	
	Now, we prove the following claim.
	
	\begin{claim}
		\label{claim:linr}
		For any integer $k$ such that $1 \leq k \leq n$, if a point $p \in \vectorl_k$, then $p \in \vectorr_k$.
	\end{claim}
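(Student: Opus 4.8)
The plan is to establish the reverse inclusion $\vectorl_k \subseteq \vectorr_k$, which together with the already-proved inclusion (\ref{eq:rinl}) yields $\vectorl_k = \vectorr_k$ and hence the lemma. Throughout I would use that $\bfA$ and $\bfB$ are row stochastic (this is exactly what makes the products $\bfB\vectorv$, $\bfA(\bfB\vectorv)$, and $(\bfA\bfB)\vectorv$ well defined through $\LL$), so all the relevant entries $\bfA_{ki}$, $\bfB_{ij}$, $(\bfA\bfB)_{kj}$ are nonnegative and each relevant row sums to $1$.

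First I would unfold the hypothesis $p \in \vectorl_k$. By the definition of $\vectorl = \bfA(\bfB\vectorv)$ and of the polytope product through $\LL$, there exist points $w_i \in (\bfB\vectorv)_i$, $1\leq i \leq n$, with $p = \sum_{i=1}^n \bfA_{ki} w_i$. Expanding each $w_i \in (\bfB\vectorv)_i = \bfB_i\vectorv$ once more, there exist points $p_{ij}\in\vectorv_j$ with $w_i = \sum_{j=1}^n \bfB_{ij} p_{ij}$. Substituting gives
\[
p = \sum_{i=1}^n \sum_{j=1}^n \bfA_{ki}\,\bfB_{ij}\, p_{ij}.
\]
The crucial point --- and the main obstacle --- is that $p_{ij}$ may depend on \emph{both} indices $i$ and $j$, whereas membership in $\vectorr_k = (\bfA\bfB)_k\vectorv$ requires a \emph{single} point $q_j\in\vectorv_j$ per column $j$, used with weight $(\bfA\bfB)_{kj}$.

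The key idea is to collapse the $i$-dependence by averaging, exploiting the convexity of $\vectorv_j$. For each $j$ set $s_j := (\bfA\bfB)_{kj} = \sum_{i=1}^n \bfA_{ki}\bfB_{ij} \geq 0$. When $s_j>0$, I would define
\[
q_j := \frac{1}{s_j}\sum_{i=1}^n \bfA_{ki}\,\bfB_{ij}\, p_{ij}.
\]
The coefficients $\bfA_{ki}\bfB_{ij}/s_j$ are nonnegative and sum to $1$ over $i$, so $q_j$ is a convex combination of the points $p_{ij}\in\vectorv_j$; since $\vectorv_j$ is a nonempty convex polytope, $q_j\in\vectorv_j$. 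When $s_j=0$, every term $\bfA_{ki}\bfB_{ij}p_{ij}$ vanishes, so I would simply pick $q_j$ to be any point of the nonempty polytope $\vectorv_j$, its weight $s_j=0$ making the choice immaterial.

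Finally I would verify the bookkeeping: by construction $s_j q_j = \sum_i \bfA_{ki}\bfB_{ij} p_{ij}$ in both cases, so summing over $j$ gives $\sum_j (\bfA\bfB)_{kj} q_j = \sum_j s_j q_j = \sum_{i,j}\bfA_{ki}\bfB_{ij}p_{ij} = p$. Writing $q=(q_1,\dots,q_n)^T$ with each $q_j\in\vectorv_j$, this is exactly $p = (\bfA\bfB)_k q$, i.e., $p\in\vectorr_k$, which proves Claim \ref{claim:linr}. Combined with (\ref{eq:rinl}) this gives $\vectorl_k = \vectorr_k$ for every $k$, and hence $\bfA(\bfB\vectorv) = (\bfA\bfB)\vectorv$. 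I expect the only delicate points to be the careful treatment of the $s_j=0$ case and making explicit that convexity of each $\vectorv_j$ is precisely what licenses replacing the doubly-indexed $p_{ij}$ by the single averaged point $q_j$.
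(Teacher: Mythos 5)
Your proof is correct and takes essentially the same approach as the paper's: both collapse the doubly-indexed points into a single point of $\vectorv_j$ by averaging over the intermediate index with the normalized weights $\bfA_{ki}\bfB_{ij}/(\bfA\bfB)_{kj}$, which is exactly the paper's construction in equation (\ref{eq:matrix-q}). The only difference is that you explicitly handle the degenerate case $(\bfA\bfB)_{kj}=0$, where the paper's formula would divide by zero --- a small but genuine tightening of the argument.
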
	
	
	\begin{proof}
		By the definition of $p$, there exists a vector of $d$-dimensional points $p'$ such that
		
		\begin{itemize}
			\item $\bfA_k  p' = p$, i.e., $p = \sum_{j = 1}^n \bfA_{kj} p'_{j}$;
			
			\item For $1 \leq j \leq n$, there exists a vector of $d$-dimensional points $p^j$ such that (i) $p'_j =  \bfB_j p^j$; and (ii) $p^j_i \in \vectorv_i$ for each $1 \leq i \leq n$. Note that condition (i) implies that $p'_j = \sum_{i = 1}^n \bfB_{ji} p^j_i$.
		\end{itemize}
		
		To prove the claim, we need to find a vector of $d$-dimensional points $q$ such that $(\bfA \bfB)_k q = p$ and $q_i \in \vectorv_i$ for each $1 \leq i \leq n$.  Define $q_i$ as follows:
		
		\begin{equation}
		\label{eq:matrix-q}
		q_i = \frac{\sum_{j = 1}^n \left( \bfA_{kj} \bfB_{ji} \right)p^j_i}{\sum_{j = 1}^n \bfA_{kj} \bfB_{ji}}
		\end{equation}
		
		Since by assumption, each $p^j_i \in \vectorv_i$, $q_i \in \vectorv_i$ as well. Now, we show that $(\bfA \bfB)_k q = p$.
		
		\begin{align*}
		(\bfA \bfB)_k q &= \sum_{i = 1}^n \left( \sum_{j=1}^n \bfA_{kj} \bfB_{ji} \right) q_i \\
		&= \sum_{i = 1}^n \left( \sum_{j=1}^n \bfA_{kj} \bfB_{ji} \right) \frac{\sum_{j = 1}^n \left( \bfA_{kj} \bfB_{ji} \right)p^j_i}{\sum_{j = 1}^n \bfA_{kj} \bfB_{ji}}\\
		&= \sum_{i = 1}^n \left( \sum_{j=1}^n \bfA_{kj} \bfB_{ji} \right) p^j_i\\
		&= \sum_{j = 1}^n \bfA_{kj} \left( \sum_{i = 1}^n \bfB_{ji} ~p^j_i \right)\\
		&= \sum_{j = 1}^n \bfA_{kj} p'_j\\
		&= p
		\end{align*}
		
		Hence, $p \in \vectorr_k$.
	\end{proof}
	
	Claim \ref{claim:linr} implies that
	
	\begin{equation}
	\label{eq:linr}
	\vectorl_k \subseteq \vectorr_k
	\end{equation}
	
	Equations (\ref{eq:rinl}) and (\ref{eq:linr}) together imply that for each $k$, 
	
	\[
	\vectorl_k = \vectorr_k
	\]
	
	Therefore, $\vectorl = \vectorr$. This completes the proof.
\end{proof}

\section{Proof of Theorem \ref{t_M}}
\label{a_t_M}

\noindent
{\bf Theorem \ref{t_M}:}
{\em 
For $t\geq 1$,
define $\vectorv[t]=\bfM[t]\vectorv[t-1]$, with $\bfM[t]$ as specified above.
Then, for $\tau\geq 0$, and for all $i\in V-\fbar[\tau+1]$,
$\vectorv_i[\tau]$ equals $h_i[\tau]$.
}

\begin{proof}
The proof of the above theorem is by induction on $\tau$.
Recall that
we defined $\vectorv_i[0]$ to be equal to $h_i[0]$ for
all $i\in V-\fbar[1]$ in the initialization step (I1) in Section \ref{s_alg}.
Thus, the theorem trivially holds for $\tau=0$.

Now, for some $\tau\geq 0$, and for
all $i\in V-\fbar[\tau+1]$, suppose that $\vectorv_i[\tau]=h_i[\tau]$.
Recall that processes in $V-\fbar[\tau+2]$ surely survive at least till the end of round
$\tau+1$ (by definition of $\fbar[\tau+2]$).
Therefore, in round $\tau+1\geq 1$, each process in $i\in V-\fbar[\tau+2]$ computes
its new state 
$h_i[\tau+1]$
at line 14 of Algorithm CC, using function $\LL(~Y_i[\tau+1]~;~ [\frac{1}{|Y_i[\tau+1]|}, \cdots,\frac{1}{|Y_i[\tau+1]|}])$, where $Y_i[\tau+1]:=\{h~|~ (h,j,\tau+1)\in \msg_i[\tau+1]\}$. 
Also,
if $(h,j,\tau+1) \in \msg_i[\tau+1]$, then process $j$ 
must have sent round $\tau+1$ message $(h_j[\tau],j,\tau+1)$ to process $i$ -- in other words,
$h$ above (in $(h,j,\tau+1)\in \msg_i[\tau+1]$) must be equal to $h_j[\tau]$.
Also, since $j$ did send a round $\tau+1$ message, $j\in V-\fbar[\tau+1]$.
Thus, by induction hypothesis, $\vectorv_j[\tau]=h_j[\tau]$.

Now observe that, by definition of $Y_i[\tau+1]$ at line 13 of the algorithm,
$|Y_i[\tau+1]|=|\msg_i[\tau+1]|$. Thus, the definition of the matrix elements in (\ref{eq:matrix_i})
and (\ref{eq:matrix_i-2}) ensures that
$\bfM_i[\tau+1]\vectorv[\tau]$ equals $\LL(~Y_i[\tau+1]~;~ [\frac{1}{|Y_i[\tau+1]|}, \cdots,\frac{1}{|Y_i[\tau+1]|}])$, i.e., $h_i[\tau+1]$.
Thus, $\vectorv_i[\tau+1]$ defined as $\bfM_i[\tau+1]\vectorv[\tau]$ also equals $h_i[\tau+1]$.
This holds for all $i\in V-\fbar[\tau+2]$, completing the induction.
\end{proof}

\section{Useful Lemmas}
\label{a_lemmas}

In this section, we prove four lemmas used later in Appendix \ref{a_correctness}.


~

The procedure for constructing $\bfM[t]$ that the lemma below refers to is presented in Section \ref{ss_correctness}.
\begin{lemma}
\label{lemma:transition_matrix}
For $t\geq 1$,
transition matrix $\bfM[t]$ constructed using the above procedure satisfies the following conditions: 

\begin{itemize}
\item  $\bfM[t]$ is a row stochastic matrix.

\item  For $i,j \in V - \fbar[t+1]$, there exists a fault-free process $g(i,j)$ such that
 $\bfM_{ig(i,j)}[t] \geq \frac{1}{n}$ 
and 
 $\bfM_{jg(i,j)}[t] \geq \frac{1}{n}$ 
\end{itemize}

\end{lemma}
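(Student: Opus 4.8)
The plan is to verify the two bullets separately, treating the two kinds of rows of $\bfM[t]$ (those governed by \emph{Rule 1} versus \emph{Rule 2}) according to their definitions, and then to extract a shared fault-free column index by a counting argument.

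For the row stochasticity claim, I would first handle the rows indexed by $j\in\fbar[t+1]$ governed by \emph{Rule 2}: every entry equals $\frac{1}{n}$, there are $n$ columns, so the row sums to $1$ and all entries are non-negative. For a row indexed by $i\in V-\fbar[t+1]$ governed by \emph{Rule 1}, the key observation is that each process sends exactly one round $t$ message, so $\msg_i[t]$ contains at most one tuple of the form $(*,k,t)$ per process $k$; consequently the number of columns $k$ for which $\bfM_{ik}[t]=\frac{1}{|\msg_i[t]|}$ is exactly $|\msg_i[t]|$, and the remaining entries are $0$. Hence this row sums to $|\msg_i[t]|\cdot\frac{1}{|\msg_i[t]|}=1$ with all entries non-negative, establishing the first bullet.

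For the second bullet, fix $i,j\in V-\fbar[t+1]$ and let $S_i$ (resp.\ $S_j$) be the set of processes whose round $t$ message lies in $\msg_i[t]$ (resp.\ $\msg_j[t]$). Since both $i$ and $j$ passed the round-$t$ threshold on line 12, we have $|S_i|=|\msg_i[t]|\geq n-f$ and likewise $|S_j|\geq n-f$. An inclusion--exclusion bound inside the universe $V$ of size $n$ then gives $|S_i\cap S_j|\geq |S_i|+|S_j|-n\geq n-2f$. Removing the at most $f$ faulty processes, at least $n-3f$ members of $S_i\cap S_j$ are fault-free; since $d\geq 1$ forces $n\geq(d+2)f+1\geq 3f+1$, this count is strictly positive, so I can pick a fault-free process $g(i,j)\in S_i\cap S_j$. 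For this process, \emph{Rule 1} gives $\bfM_{ig(i,j)}[t]=\frac{1}{|\msg_i[t]|}$ and $\bfM_{jg(i,j)}[t]=\frac{1}{|\msg_j[t]|}$, and because $|\msg_i[t]|,|\msg_j[t]|\leq n$ both are at least $\frac{1}{n}$, as required.

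The only delicate point, which I would treat as the main obstacle, is guaranteeing that the common set $S_i\cap S_j$ actually contains a \emph{fault-free} process rather than merely being non-empty; this is precisely where the resilience bound $n\geq 3f+1$ (implied by $n\geq(d+2)f+1$ together with $d\geq 1$) is consumed. Everything else is a direct unwinding of \emph{Rule 1} and \emph{Rule 2} together with the line-12 threshold $|\msg_i[t]|\geq n-f$.
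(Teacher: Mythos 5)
Your proposal is correct and follows essentially the same route as the paper's proof: row stochasticity by direct inspection of \emph{Rule 1} and \emph{Rule 2}, then the inclusion--exclusion bound $|\msg_i[t]\cap\msg_j[t]|\geq n-2f$ combined with $n\geq(d+2)f+1$ and $d\geq 1$ (hence $n\geq 3f+1$) to extract a common fault-free sender, whose column entries are $\frac{1}{|\msg_i[t]|}\geq\frac{1}{n}$ and $\frac{1}{|\msg_j[t]|}\geq\frac{1}{n}$. The only cosmetic difference is that you phrase the final count as $n-3f\geq 1$ while the paper phrases it as $n-2f\geq f+1$; these are the same inequality.
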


\begin{proof}
\begin{itemize}
\item Observe that, by construction, for each $i\in V$, the row vector $\bfM_i[t]$ 
contains only non-negative elements, which add up to 1. Thus, each row
$\bfM_i[t]$ is a stochastic vector, and hence
the matrix $\bfM[t]$ is row stochastic.

\item To prove the second claim in the lemma, consider any pair of processes $i,j \in V - \fbar[t+1]$.
Recall that the set $\msg_i[t]$ used in the construction of $\bfM[t]$ is such that $|\msg_i[t]|=|Y_i[t]|$ (i.e., $\msg_i[t]$ is
the message set at the point where $Y_i[t]$ is created).
Thus, $|\msg_i[t]|\geq n-f$ and $|\msg_j[t]| \geq n-f$, and there must be at least $n-2f$ messages in $\msg_i[t]\cap \msg_j[t]$. By assumption, $n \geq (d+2)f+1$. Hence, $n-2f \geq df+1 \geq f+1$, since $d \geq 1$. Therefore, there exists a fault-free process $g(i,j)$ such that $(h_{g(i,j)}[t-1],g(i,j),t) \in \msg_i[t] \cap \msg_j[t]$. By (\ref{eq:matrix_i}) in the procedure to construct $\bfM[t]$, $\matrixm_{ig(i,j)}[t]=\frac{1}{|\msg_i[t]|}\geq \frac{1}{n}$
and 
$\matrixm_{jg(i,j)}[t]=\frac{1}{|\msg_j[t]|}\geq\frac{1}{n}$.
\end{itemize}
\end{proof}


~

To facilitate the proof of next lemma below, we first
introduce some terminology and results related to matrices.

For a row stochastic matrix $\bfA$,
 coefficients of ergodicity $\delta(\bfA)$ and $\lambda(\bfA)$ are defined as
follows \cite{Wolfowitz}:
\begin{eqnarray*}
\delta(\bfA) & = &   \max_j ~ \max_{i_1,i_2}~ \| \bfA_{i_1\,j}-\bfA_{i_2\,j} \| \label{e_zelta} \\~\\
\lambda(\bfA) & = & 1 - \min_{i_1,i_2} \sum_j \min(\bfA_{i_1\,j} ~, \bfA_{i_2\,j}) \label{e_lambda}
\end{eqnarray*}
\begin{claim}
\label{claim_zelta}
For any $p$ square row stochastic matrices $\bfA(1),\bfA(2),\dots, \bfA(p)$, 
\begin{eqnarray*}
\delta(\Pi_{\tau=1}^p \bfA(\tau)) ~\leq ~
 \Pi_{\tau=1}^p ~ \lambda(\bfA(\tau)).
\end{eqnarray*}
\end{claim}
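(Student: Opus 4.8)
The plan is to reduce the statement to a single submultiplicative inequality for two matrices and then induct. Throughout, I would use the elementary identity that for any two stochastic row vectors $u,v$ (nonnegative entries summing to $1$), $\sum_j \min(u_j, v_j) = 1 - \frac{1}{2}\sum_j \|u_j - v_j\|$; this follows by splitting the index set according to the sign of $u_j - v_j$ and using $\sum_j u_j = \sum_j v_j = 1$. Consequently $\lambda(\bfA) = \max_{i_1,i_2} \frac{1}{2}\sum_j \|\bfA_{i_1 j} - \bfA_{i_2 j}\|$, i.e.\ $\lambda$ is the largest total-variation distance between two rows, whereas $\delta(\bfA)$ is the largest entrywise gap between two rows. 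A short argument (each positive part of a zero-sum difference is at most the sum of all positive parts) then shows $\max_j \|u_j - v_j\| \le \frac{1}{2}\sum_j \|u_j - v_j\|$, and hence $\delta(\bfC) \le \lambda(\bfC)$ for every stochastic matrix $\bfC$.

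The key step is to prove, for stochastic $\bfA,\bfB$, that $\delta(\bfA\bfB) \le \lambda(\bfA)\,\delta(\bfB)$. First I would fix a column $j$ and two rows $i_1,i_2$ and write $(\bfA\bfB)_{i_1 j} - (\bfA\bfB)_{i_2 j} = \sum_k (\bfA_{i_1 k} - \bfA_{i_2 k})\bfB_{kj}$. Setting $\mu_k = \bfA_{i_1 k} - \bfA_{i_2 k}$, the crucial fact is that $\sum_k \mu_k = 0$, because both rows are stochastic. Splitting $\mu_k$ into its nonnegative and nonpositive parts, whose total masses are equal to a common value $S \le \lambda(\bfA)$, the difference becomes $S$ times the difference of two convex combinations (averages) of the entries $\{\bfB_{kj}\}_k$. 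Each such average lies between $\min_k \bfB_{kj}$ and $\max_k \bfB_{kj}$, so the whole expression is bounded in absolute value by $S\,(\max_k \bfB_{kj} - \min_k \bfB_{kj}) \le \lambda(\bfA)\,\delta(\bfB)$. Taking the maximum over $i_1,i_2,j$ yields the two-matrix inequality.

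With these two facts in hand, the proposition follows by a straightforward induction on $p$ using the backward product convention of (\ref{backward}). Peeling off the leftmost factor, $\delta(\Pi_{\tau=1}^p \bfA(\tau)) = \delta(\bfA(p)\,[\,\bfA(p-1)\cdots\bfA(1)\,]) \le \lambda(\bfA(p))\,\delta(\bfA(p-1)\cdots\bfA(1))$, and iterating drives the $\delta$ down to a single factor, leaving $\Pi_{\tau=2}^p \lambda(\bfA(\tau)) \cdot \delta(\bfA(1))$; finally applying $\delta(\bfA(1)) \le \lambda(\bfA(1))$ gives $\Pi_{\tau=1}^p \lambda(\bfA(\tau))$.

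I expect the two-matrix submultiplicativity to be the main obstacle, since it is where the stochasticity is genuinely used. The subtlety is recognizing that the zero-sum property of the row difference lets one replace a crude bound by the column span $\delta(\bfB)$ rather than by the largest entry of $\bfB$, and that the common mass of the positive and negative parts is exactly the total-variation quantity captured by $\lambda(\bfA)$. Everything else is bookkeeping.
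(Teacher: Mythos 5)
Your proposal is correct, but it is worth noting that the paper does not actually prove Claim \ref{claim_zelta} at all: it simply cites Hajnal's 1958 paper on weak ergodicity of non-homogeneous Markov chains. What you have written out is, in essence, the classical argument behind that citation (a Dobrushin/Hajnal-style contraction estimate), so your route is genuinely different in the sense that it is self-contained. Each step checks out: the identity $\sum_j \min(u_j,v_j) = 1 - \tfrac{1}{2}\sum_j \|u_j - v_j\|$ correctly recasts $\lambda$ as the maximal total-variation distance between rows; the zero-sum decomposition of $\mu_k = \bfA_{i_1 k}-\bfA_{i_2 k}$ into positive and negative parts of common mass $S \le \lambda(\bfA)$ is exactly where row-stochasticity of $\bfA$ enters, and expressing the column difference as $S$ times a difference of two averages of $\{\bfB_{kj}\}_k$ legitimately yields the bound $S\,(\max_k \bfB_{kj} - \min_k \bfB_{kj}) \le \lambda(\bfA)\,\delta(\bfB)$; and the induction correctly peels off the leftmost factor under the backward product convention of (\ref{backward}), finishing with $\delta(\bfA(1)) \le \lambda(\bfA(1))$. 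The only cosmetic omission is the degenerate case $S=0$, where the two rows of $\bfA$ coincide and the column difference vanishes, so the bound holds trivially. What the paper's citation buys is brevity; what your argument buys is a self-contained proof that makes visible precisely where stochasticity is used (only for the left factor in each two-matrix step) and that the two-matrix inequality $\delta(\bfA\bfB) \le \lambda(\bfA)\,\delta(\bfB)$ is strictly stronger than what is needed, since it also shows $\delta$ contracts under a single $\lambda$ factor rather than requiring all $p$ factors at once.
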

Claim \ref{claim_zelta} is proved in \cite{Hajnal58}.

~

\begin{claim}
\label{c_lambda_bound}
If there exists a constant $\gamma$, where $0<\gamma\leq 1$, such
that, for any pair of rows $i,j$ of matrix $\bfA$, there exists a column
$g$ (that may depend on $i,j$) such that,
$\min (\bfA_{ig},\bfA_{jg}) \geq \gamma$,
then
 $\lambda(\bfA)\leq 1-\gamma<1$.
\end{claim}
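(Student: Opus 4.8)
The plan is to prove the contrapositive-free, direct route: unpack the definition of $\lambda(\bfA)$ and reduce the claim to a statement about a single sum. By the definition of the coefficient of ergodicity, $\lambda(\bfA) = 1 - \min_{i_1,i_2} \sum_j \min(\bfA_{i_1\,j}, \bfA_{i_2\,j})$, so the desired inequality $\lambda(\bfA) \leq 1 - \gamma$ is equivalent to
\[
\min_{i_1,i_2}\; \sum_j \min(\bfA_{i_1\,j}, \bfA_{i_2\,j}) ~\geq~ \gamma.
\]
Thus it suffices to show that for \emph{every} pair of rows $i_1, i_2$, the sum $\sum_j \min(\bfA_{i_1\,j}, \bfA_{i_2\,j})$ is at least $\gamma$; taking the minimum over all pairs then gives the bound, and $\gamma > 0$ yields the strict inequality $1 - \gamma < 1$.

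First I would fix an arbitrary pair of rows $i_1, i_2$ and invoke the hypothesis, which supplies a column $g = g(i_1, i_2)$ with $\min(\bfA_{i_1\,g}, \bfA_{i_2\,g}) \geq \gamma$. The key observation is that because $\bfA$ is row stochastic, all its entries are non-negative, so each summand $\min(\bfA_{i_1\,j}, \bfA_{i_2\,j})$ is non-negative. Hence the full sum dominates any single one of its terms, and in particular
\[
\sum_j \min(\bfA_{i_1\,j}, \bfA_{i_2\,j}) ~\geq~ \min(\bfA_{i_1\,g}, \bfA_{i_2\,g}) ~\geq~ \gamma.
\]
Since the pair $(i_1, i_2)$ was arbitrary, this lower bound holds uniformly, and taking the minimum over all pairs completes the reduction from the first paragraph, giving $\lambda(\bfA) \leq 1 - \gamma$.

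Honestly, there is no serious obstacle here: the entire argument rests on the single, almost trivial, observation that a sum of non-negative terms is at least as large as any one term, so dropping all terms except the one indexed by the guaranteed column $g$ costs nothing. The only point requiring a moment's care is the ordering of quantifiers — the column $g$ is allowed to depend on the pair $(i_1, i_2)$, so I must establish the bound pair-by-pair \emph{before} taking the outer minimum, rather than hoping for a single column that works for all pairs simultaneously. I would also note explicitly where row stochasticity is used (to guarantee non-negativity of the entries), since that is the one hypothesis making the term-dropping step legitimate.
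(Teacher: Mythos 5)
Your proof is correct and is exactly the argument the paper has in mind: the paper dismisses this claim with "follows directly from the definition of $\lambda(\cdot)$," and your write-up simply spells out that one-line observation (a sum of non-negative terms dominates the single term indexed by the hypothesized column $g$, taken pair-by-pair before the outer minimum). Nothing further is needed.
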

Claim \ref{c_lambda_bound} follows directly from the definition of $\lambda(\cdotp)$.

~

\begin{lemma}
\label{lemma:transition_matrix2}
For $t\geq 1$,
let $\bfP[t] = \Pi_{\tau=1}^t\, \bfM[\tau]$.
Then,
\begin{itemize}
\item $\bfP[t]$ is a row stochastic matrix.
\item For $i,j\in V-F$, and $k\in V$,
\begin{equation}
\|\, \bfP_{ik}[t] -  \bfP_{jk}[t]\,\| \leq \left(1-\frac{1}{n}\right)^t
\end{equation}
where $\|a\|$ denotes absolute value of real number $a$.
\end{itemize}
\end{lemma}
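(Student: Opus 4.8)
The plan is to prove the two claims separately, leaning on the coefficients of ergodicity $\delta(\cdot)$ and $\lambda(\cdot)$ together with Claims~\ref{claim_zelta} and~\ref{c_lambda_bound}. For row stochasticity of $\bfP[t]$, I would argue directly: Lemma~\ref{lemma:transition_matrix} already gives that every factor $\bfM[\tau]$ is row stochastic, and a product of row stochastic matrices is again row stochastic. The entries of the product remain non-negative, being sums of products of non-negative numbers, and each row sum stays equal to $1$ because the all-ones column vector $\mathbf{1}$ satisfies $\bfA\mathbf{1}=\mathbf{1}$ for any row stochastic $\bfA$, hence $(\bfM[t]\cdots\bfM[1])\mathbf{1}=\mathbf{1}$.

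For the distance bound, the key observation is that for any fixed $i,j,k$ the quantity $\|\bfP_{ik}[t]-\bfP_{jk}[t]\|$ is at most the coefficient of ergodicity $\delta(\bfP[t])=\max_k\max_{i_1,i_2}\|\bfP_{i_1k}[t]-\bfP_{i_2k}[t]\|$, simply by the definition of $\delta$. So it suffices to show $\delta(\bfP[t])\leq (1-\frac{1}{n})^t$, which then yields the stated inequality for all $i,j\in V-F$ (indeed for all rows). Applying Claim~\ref{claim_zelta} with $\bfA(\tau)=\bfM[\tau]$ gives $\delta(\bfP[t])=\delta(\Pi_{\tau=1}^t\bfM[\tau])\leq \Pi_{\tau=1}^t\lambda(\bfM[\tau])$, so the whole problem reduces to the per-round bound $\lambda(\bfM[\tau])\leq 1-\frac{1}{n}$.

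To obtain $\lambda(\bfM[\tau])\leq 1-\frac{1}{n}$ I would invoke Claim~\ref{c_lambda_bound} with $\gamma=\frac{1}{n}$, which requires exhibiting, for every pair of rows $i_1,i_2$ of $\bfM[\tau]$, a common column $g$ with $\min(\bfM_{i_1g}[\tau],\bfM_{i_2g}[\tau])\geq\frac{1}{n}$. Lemma~\ref{lemma:transition_matrix} supplies exactly such a column (the fault-free $g(i_1,i_2)$) whenever both $i_1,i_2\in V-\fbar[\tau+1]$. The remaining step is to cover the pairs in which one or both rows correspond to processes in $\fbar[\tau+1]$: here Rule~2 makes those rows the uniform vector with every entry equal to $\frac{1}{n}$, so any column already gives value $\frac{1}{n}$ for a crashed row, and for a surviving partner row $i$ I can take any column $g$ on which $i$ received a round-$\tau$ message, where $\bfM_{ig}[\tau]=1/|\msg_i[\tau]|\geq \frac{1}{n}$ since $|\msg_i[\tau]|\leq n$. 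Thus the common-column property holds for \emph{all} pairs of rows, Claim~\ref{c_lambda_bound} applies per round, and chaining with Claim~\ref{claim_zelta} finishes the bound.

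The main obstacle I anticipate is precisely this extension of the common-column property from the surviving-process pairs guaranteed by Lemma~\ref{lemma:transition_matrix} to \emph{all} pairs of rows, since $\lambda(\bfM[\tau])$ is a property of the entire matrix and Claim~\ref{c_lambda_bound} demands the condition for every pair $i_1,i_2$. Once the Rule~2 (uniform $\frac{1}{n}$) rows are handled as above, the argument is a routine composition: bound each $\lambda(\bfM[\tau])$, multiply via Claim~\ref{claim_zelta}, and read off the entrywise inequality from the definition of $\delta$.
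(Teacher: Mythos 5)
Your proof is correct and follows essentially the same route as the paper's: row stochasticity of the product, a per-round bound $\lambda(\bfM[\tau])\leq 1-\frac{1}{n}$ obtained from Lemma \ref{lemma:transition_matrix} via Claim \ref{c_lambda_bound}, chained through Claim \ref{claim_zelta}, and the entrywise inequality read off from the definition of $\delta(\cdot)$. Your explicit treatment of the rows indexed by $\fbar[\tau+1]$ (uniform $\frac{1}{n}$ entries from Rule 2, paired with any message column of a surviving row) is in fact a detail the paper's proof passes over silently, since Lemma \ref{lemma:transition_matrix} guarantees a common column only for pairs of surviving processes while $\lambda$ is a minimum over all pairs of rows, so your version is, if anything, the more complete one.
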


\begin{proof}
By the first claim of Lemma \ref{lemma:transition_matrix}, $\bfM[\tau]$ for $1\leq \tau \leq t$ is row stochastic. Thus, $\bfP[t]$ is a product of
row stochastic matrices, and hence, it is itself also
 row stochastic. 

Now, observe that by the second claim in Lemma \ref{lemma:transition_matrix} and
Claim \ref{c_lambda_bound}, $\lambda(\matrixm[t])\leq 1-\frac{1}{n}<1$.
Then by Claim \ref{claim_zelta} above,
\begin{eqnarray}
\label{e_alpha}
\delta(\bfP[t])=
\delta(\Pi_{\tau=1}^t \bfM[\tau])
~ \leq ~  \Pi_{\tau=1}^{t} \lambda(\bfM[\tau]) 
 ~\leq~ {\left(1-\frac{1}{n}\right)}^t 
\end{eqnarray}

Consider any two fault-free processes $i,j\in V-F$.
By (\ref{e_alpha}), $\delta(\bfP[t])\leq {\left(1-\frac{1}{n}\right)}^t$.
Therefore, by the definition of $\delta(\cdot)$, for $1 \leq k \leq n$, we have

\begin{equation}
\| \bfP_{ik}[t] -  \bfP_{jk}[t]\| \leq {\left(1-\frac{1}{n}\right)}^t
\end{equation}
\end{proof}

~

We now prove two lemmas related to validity of convex hulls computed in Algorithm CC. Recall that a valid convex hull is defined in Definition \ref{def:valid_hull}.



\begin{lemma}
\label{lemma:valid_initial_hull}
$h_i[0]$ for each process $i \in V - \fbar[1]$ is valid.
\end{lemma}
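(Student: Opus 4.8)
The plan is to exploit the intersection structure of the formula on line 5, namely $h_i[0] = \bigcap_{C \subseteq X_i,\, |C| = |X_i|-f} \HH(C)$, together with a counting argument on how many entries of the multiset $X_i$ can carry incorrect inputs. Since $h_i[0]$ is an intersection over a family of convex hulls, it is contained in \emph{each} individual $\HH(C)$ appearing in that family. Hence it suffices to exhibit a single subset $C^\ast$ in the family for which $\HH(C^\ast)$ is already valid, i.e. contained in the convex hull of the inputs of the fault-free processes; validity of $h_i[0]$ then follows immediately from $h_i[0] \subseteq \HH(C^\ast)$.

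The heart of the argument is the construction of such a $C^\ast$. First I would recall that, for $i \in V - \fbar[1]$, process $i$ completes round 0, so by the Liveness property of \emph{stable vector} the set $R_i$ contains at least $n-f$ distinct tuples $(x,k,0)$, one per process index $k$, and $|X_i| = |R_i|$. Because at most $f$ processes are faulty (that is, $|F| \le f$), at most $f$ of the entries of $X_i$ originate from processes in $F$ and may therefore be incorrect; the remaining at least $|X_i| - f$ entries come from fault-free processes and hence are correct inputs. I would then choose $C^\ast$ to be any sub-multiset of $X_i$ of size exactly $|X_i| - f$ consisting solely of entries contributed by fault-free processes. This is possible precisely because the number of correct-input entries in $X_i$ is at least $|X_i| - f$, so we can discard all (at most $f$) faulty entries and, if necessary, discard a few additional correct entries to hit the required cardinality.

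Having fixed $C^\ast$, the conclusion is routine: every point of $C^\ast$ is the input of some fault-free process, so $\HH(C^\ast) \subseteq \HH(\{x_j : j \in V - F\})$, which is exactly the convex hull of the correct inputs; thus $\HH(C^\ast)$ is valid by Definition \ref{def:valid_hull}. Since $C^\ast$ is one of the subsets over which the intersection defining $h_i[0]$ is taken, we get $h_i[0] \subseteq \HH(C^\ast) \subseteq \HH(\{x_j : j \in V - F\})$, establishing that $h_i[0]$ is valid.

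The main (and really only) subtlety to get right is the cardinality bookkeeping in the construction of $C^\ast$: one must confirm that each process contributes at most one tuple to $R_i$ (so that the count of potentially incorrect entries is bounded by $|F| \le f$ rather than by something larger), and that $|X_i| - f$ does not exceed the number of available correct-input entries, so that a valid $C^\ast$ of exactly the prescribed size indeed exists. Everything else—the fact that an intersection is contained in each of its members, and that the convex hull of a subset of correct inputs lies in the convex hull of all correct inputs—is immediate and requires no delicate estimate.
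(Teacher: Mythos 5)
Your proof is correct and follows essentially the same route as the paper: since at most $f$ entries of $X_i$ can originate from faulty processes, at least one sub-multiset $C$ of size $|X_i|-f$ in the intersection on line 5 consists entirely of correct inputs, and $h_i[0] \subseteq \HH(C)$ then yields validity. Your version merely makes explicit the cardinality bookkeeping (each process contributes one tuple, so discarding the at most $f$ faulty entries and possibly a few correct ones produces the desired $C^\ast$) that the paper's proof leaves implicit.
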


\begin{proof}
Recall that $h_i[0]$ is obtained on line 5 of Algorithm CC as \[ h_i[0]~:=~\cap_{\,C \subseteq X_i,\, |C| = |X_i| - f}~~\HH(C),\] where $X_i = \{\,x \,|\,(x,k,0)\in R_i\}$.
Under the {\em crash faults with incorrect inputs} model,
except for up to $f$ values in $X_i$ (which may correspond to
inputs at faulty processes), all the
other values in $X_i$ must correspond to inputs at fault-free processes (and hence
they are correct).
Therefore, at least one set $C$ used in the computation of $h_i[0]$ 
must contain only the inputs at fault-free processes. 
Therefore, $h_i[0]$ is in the convex hull of the inputs at fault-free processes.
That is, $h_i[0]$ is valid.
\end{proof}

\begin{lemma}
\label{lemma:linear_valid}
Suppose non-empty convex polytopes $h_1, h_2, \cdots, h_\nu$ are all valid. Consider $\nu$ constants $c_1, c_2, \cdots, c_\nu$ such that $0 \leq c_i \leq 1$ and $\sum_{i = 1}^\nu c_i = 1$.
Then the linear combination of these convex polytopes,
$\LL([h_1, h_2, \cdots, h_\nu]\,;\, [ c_1, c_2, \cdots, c_\nu])$, is
{\bf convex, non-empty, and valid}.
\end{lemma}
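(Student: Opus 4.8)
The plan is to prove the three properties of $\LL([h_1,\dots,h_\nu]\,;\,[c_1,\dots,c_\nu])$ in turn, leaning on the definition of $\LL$ in Definition \ref{def:linear_hull} and the hypothesis that each $h_i$ is non-empty, convex, and valid. First I would dispose of non-emptiness: since each $h_i$ is non-empty, I can pick a point $p_i \in h_i$ for every $i$, and then the point $p = \sum_{i=1}^\nu c_i p_i$ satisfies the defining condition \eqref{eq:linear_hull}, so it lies in the linear combination, which is therefore non-empty. This is immediate and requires no real work.

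Next I would handle convexity. The paper already asserts (right after Definition \ref{def:linear_hull}) that $\LL$ of convex non-empty polytopes is convex non-empty, calling the proof straightforward, so I would either cite that remark or give the short argument directly. The direct argument: take two points $p, q$ in the linear combination with witnesses $p_i, q_i \in h_i$, and for $\lambda \in [0,1]$ observe that $\lambda p + (1-\lambda) q = \sum_i c_i\bigl(\lambda p_i + (1-\lambda) q_i\bigr)$; each $\lambda p_i + (1-\lambda) q_i \in h_i$ by convexity of $h_i$, so this convex combination is again a valid witness, placing $\lambda p + (1-\lambda) q$ in the linear combination.

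The substantive part is validity. I would show that every point $p$ in the linear combination lies in $\HH(S)$, where $S$ is the (correct) input set of the fault-free processes. Fix such a $p$ with witnesses $p_i \in h_i$. Since each $h_i$ is valid, Definition \ref{def:valid_hull} gives $p_i \in \HH(S)$ for every $i$. Then $p = \sum_i c_i p_i$ is a convex combination (because the $c_i$ are non-negative and sum to $1$) of points all lying in the convex set $\HH(S)$; convexity of $\HH(S)$ forces $p \in \HH(S)$. Hence the entire linear combination is contained in $\HH(S)$ and is therefore valid.

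I do not expect any genuine obstacle here: all three claims reduce to the stability of $\HH(S)$ and of each $h_i$ under taking convex combinations of witness points, which follows mechanically from their convexity together with the constraint $c_i \geq 0$, $\sum_i c_i = 1$. The only point requiring mild care is keeping the per-coordinate witnesses $p_i$ consistent across the three arguments so that the same defining condition \eqref{eq:linear_hull} is being invoked each time; beyond that, the proof is a routine unwinding of the definitions.
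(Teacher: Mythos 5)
Your proposal is correct and follows essentially the same route as the paper's own proof: non-emptiness via a single witness point per polytope, convexity by combining witnesses coordinate-wise and invoking convexity of each $h_i$, and validity by noting that a convex combination of points in the convex hull of the correct inputs stays in that hull. No gaps; the arguments match the paper's almost step for step.
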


\begin{proof}
Polytopes $h_1,\cdots,h_\nu$ are given as non-empty, convex, and valid.
Let
\begin{equation}
\label{e_a_l}
L := \LL([h_1, h_2,\cdots, h_\nu]\,;~[c_1, c_2,\cdots, c_\nu])
\end{equation}
We will show that $L$ is convex, non-empty, and valid.

\paragraph{\em $L$ is convex:}
Given any two points $x, y$ in $L$, by Definition \ref{def:linear_hull}, we have
\begin{equation}
\label{eq:x2}
x = \sum_{1\leq i\leq \nu} c_i p_{(i, x)}~~\text{for some}~p_{(i,x)} \in h_i, 1\leq i\leq \nu
\end{equation}

and 

\begin{equation}
\label{eq:y2}
y = \sum_{1\leq i\leq \nu} c_i p_{(i,y)}~~\text{for some}~p_{(i,y)} \in h_i,~~1\leq i\leq \nu
\end{equation}

Now, we show that any convex combination of $x$ and $y$ is also in $L$ defined in (\ref{e_a_l}). Consider a point $z$ such that

\begin{equation}
\label{eq:z2}
z = \theta x + (1 - \theta) y~~~\text{where}~0 \leq \theta \leq 1
\end{equation}

Substituting (\ref{eq:x2}) and (\ref{eq:y2}) into (\ref{eq:z2}), we have
\begin{align}
z &= \theta \sum_{1\leq i\leq\nu}~c_i p_{(i,x)} + (1-\theta) \sum_{1\leq i\leq \nu}~c_i p_{(i,y)}\nonumber\\
&= \sum_{1\leq i\leq \nu}~c_i\left(\theta p_{(i,x)} + (1-\theta) p_{(i,y)}\right) \label{eq:p_iz2}
\end{align}

Define $p_{(i,z)} = \theta p_{(i,x)} + (1-\theta) p_{(i,y)}$ for $1\leq i\leq \nu$. Since $h_i$ is convex, and $p_{(i,z)}$ is a convex combination of $p_{(i,x)}$ and $p_{(i,y)}$, $p_{(i,z)}$ is also in $h_i$. Substituting the definition of $p_{(i,z)}$ in (\ref{eq:p_iz2}), we have

\begin{align*}
z &=  \sum_{1\leq i\leq \nu}~~c_i~p_{(i,z)}~~\text{where}~p_{(i,z)} \in h_i,~~1\leq i\leq \nu
\end{align*}
Hence, by Definition \ref{def:linear_hull}, $z$ is also in $L$. Therefore, $L$ is convex.

\paragraph{\em $L$ is non-empty:}
The proof that $L$ is non-empty is trivial. Since each of the $h_i$'s is non-empty,
there exists at least one point $z_i\in h_i$ for $1\leq  i \leq \nu$.
Then $\sum_{1\leq i\leq \nu} c_iz_i$ is in $L$, and hence $L$ is non-empty.

\paragraph{\em $L$ is valid:}
The proof that $L$ is valid is also straightforward. Since each of the $h_i$'s
is valid, each point in each $h_i$ is a convex combination of the inputs
at the fault-free processes. Since each point in $L$ is a convex combination
of points in $h_i$'s, it then follows that each point in $L$
is in the convex hull of the inputs at fault-free processes.
\end{proof}

\comment{++++++++++++++++++++++++++++
\begin{proof}
Polytopes $h_1,\cdots,h_\nu;c_1,\cdots,c_\nu$ are given as non-empty and convex.
Observe that the points in $\LL(h_1,\cdots,h_\nu;c_1,\cdots,c_\nu)$ are convex
combinations of the points in $h_1,\cdots,h_\nu$, because $\sum_{i=1}^\nu c_i = 1$ and $0\leq c_i\leq 1$,
for $1\leq i\leq \nu$.
Let $I$ be the set of inputs at the fault-free processes
in $V-F$. Then, $\HH(I)$ is the convex hull of the inputs at the fault-free processes.
Since $h_i$, $1\leq i\leq k$, is valid, each point $p\in h_i$ is in $\HH(I)$.
Since $\HH(I)$ is a convex polytope, it follows that any convex combination
of the points in $h_1,\cdots,h_k$ is also in $\HH(I)$.
\end{proof}
++++++++++++++++++++++}

\section{Proof of Theorem \ref{thm:correctness}}
\label{a_correctness}

~

\noindent
{\bf Theorem \ref{thm:correctness}:}
{\em 
Algorithm CC satisfies the {\em validity}, {\em $\epsilon$-agreement} and {\em termination} properties.
}

~

\begin{proof}
We prove that Algorithm CC satisfies the {\em validity}, {\em $\epsilon$-agreement} and {\em termination} properties after a large enough number of asynchronous rounds.

Repeated applications of
Lemma \ref{l_progress} ensures that the fault-free processes will progress
from round 0 through round $r$, for any $r\geq 0$, allowing us to use (\ref{e_unroll}).
Consider round $t \geq 1$. Let
\begin{eqnarray}\bfP[t] &=& \Pi_{\tau=1}^t \bfM[\tau]
\label{e_Mstar}
\end{eqnarray}

\paragraph{\normalfont\em Validity:}
We prove validity using the series of observations below:
\begin{itemize}
\item {\em Observation 1}:
By Lemma \ref{lemma:J_in_H0} (in Appendix \ref{a_lemma_J}) and Lemma \ref{lemma:valid_initial_hull} (in Appendix \ref{a_lemmas}),
$h_i[0]$ for each $i\in \sv - \fbar[1]$ is non-empty and valid.
Also, each such $h_i[0]$ is convex by construction (line 5 of Algorithm CC).

\item {\em Observation 2}:
As per the initialization step (I1) (in Section \ref{ss_correctness}), for each $i\in V-\fbar[1]$,
$\vectorv_i[0] := h_i[0]$; thus, by Observation 1 above, for each such process $i$,
$\vectorv_i[0]$ is convex, valid and non-empty.
Also, in initialization step (I2) (in Section \ref{ss_correctness}), for each process $k\in \fbar[1]$,
we set $\vectorv_k[0]:=h_m[0]$, where $m$ is a fault-free process;
thus, by Observation 1, for each such process $k$, $\vectorv_k[0]$ is convex, valid and non-empty.
Therefore, each element of $\vectorv[0]$ is a 
non-empty, convex and valid polytope.

\item {\em Observation 3}: By Lemma \ref{lemma:transition_matrix2} in Appendix \ref{a_lemmas}, $\bfP[t]$ is a {\em row stochastic} matrix.
Thus, elements of each row of $\bfP[t]$ are non-negative and add up to 1.
Therefore,
by Observation 2 above, and Lemma \ref{lemma:linear_valid} in Appendix \ref{a_lemmas}, $\bfP_i[t]\vectorv[0]$ for each $i\in V-F$
is valid, convex and non-empty. Also, by Theorem \ref{t_M}, and equation (\ref{e_unroll}), $h_i[t]=\bfP[t]\vectorv[0]$
for $i\in V-F$.
Thus, $h_i[t]$ is valid, convex and non-empty for $t \geq 1$.
\end{itemize}
Therefore, {\em Algorithm CC} satisfies the validity property. 

\paragraph{\normalfont \em $\epsilon$-Agreement and Termination:}

Recall that by Lemma \ref{lemma:transition_matrix2} in Appendix \ref{a_lemmas}, for any two fault-free processes $i,j\in V-F$, and for $1 \leq k \leq n$, we have

\[
\| \bfP_{ik}[t] -  \bfP_{jk}[t]\| \leq \left(1-\frac{1}{n}\right)^t
\]

Processes in $\fbar[1]$ do not send any messages to
any other process in round 1 and beyond. Thus, by the construction of $\bfM[t]$,
for each $a\in V-\fbar[1]$  and $b\in \fbar[1]$,
$\bfM_{ab}[t]=0$ for all $t\geq 1$;
it then follows that $\bfP_{ab}[t]=0$ as well.\footnote{Claim \ref{claim:M*}
in Appendix \ref{a_l:svSize} below provides a more detailed proof of this statement.}

Consider fault-free processes $i,j\in V-F$.
(In the following discussion,
we will denote a point in the $d$-dimensional Euclidean space by a list of its $d$ coordinates.)
The previous paragraph implies that, for
any point $q_i$ in $h_i[t]=\vectorv_i[t]=\bfP_i[t]\vectorv[0]$,
there must exist, for all $k\in V-\fbar[1]$, $p_k \in h_k[0],$ such that
\begin{equation}
\label{pi}
q_i = \sum_{k\in V-\fbar[1]} \bfP_{ik}[t] p_k
= \left(\sum_{k\in V-\fbar[1]} \bfP_{ik}[t] p_k(1),~~\sum_{k\in V-\fbar[1]} \bfP_{ik}[t] p_k(2), \cdots, \sum_{k\in V-\fbar[1]} \bfP_{ik}[t] p_k(d)\right)
\end{equation}
where $p_k(l)$ denotes the value of $p_k$'s $l$-th coordinate.
The list on the right-hand-side of the above equation represents
the $d$ coordinates of point $p_i$.

Using points $p_k$ in the above equation,
now choose point $q_j$ in $h_j[t]$ defined as follows.
\begin{equation}
\label{pj}
q_j = \sum_{k\in V-\fbar[1]} \bfP_{jk}[t] p_k 
= \left(\sum_{k\in V-\fbar[1]} \bfP_{jk}[t] p_k(1),~~ \sum_{k\in V-\fbar[1]} \bfP_{jk}[t] p_k(2), \cdots,  \sum_{k\in V-\fbar[1]} \bfP_{jk}[t] p_k(d)\right)
\end{equation}

Recall that the Euclidean distance between $q_i$ and $q_j$ is $\ddd(q_i,q_j)$. From Lemma \ref{lemma:transition_matrix2} (in Appendix \ref{a_lemmas}), (\ref{pi}) and (\ref{pj}), we have
the following:

\begin{eqnarray*}
\ddd(q_i, q_j) &=&  \sqrt{\sum_{l=1}^d (q_i(l) - q_j(l))^2} \nonumber\\
&= &\sqrt{\sum_{l=1}^d \left(\sum_{k\in V-\fbar[1]} \bfP_{ik}[t] p_k(l) - \sum_{k\in V-\fbar[1]} \bfP_{jk} p_k(l)\right)^2}\nonumber ~~~~ \text{by (\ref{pi}) and (\ref{pj}})\\
&= &\sqrt{\sum_{l=1}^d \left(\sum_{k\in V-\fbar[1]} (\bfP_{ik}[t]-\bfP_{jk}[t]) p_k(l)\right)^2}\nonumber\\
&\leq &\sqrt{\sum_{l=1}^d \left[\left(1-\frac{1}{n}\right)^{2t}\left( \sum_{k\in V-\fbar[1]} \|p_k(l)\|\right)^2\right]} 
~~~~~~~ \mbox{~~~~ by Lemma \ref{lemma:transition_matrix2}}\nonumber\\
&= &\left(1-\frac{1}{n}\right)^t \sqrt{\sum_{l=1}^d \left(\sum_{k\in V-\fbar[1]} \|p_k(l)\|\right)^2}\label{eq:d_pipj}
\end{eqnarray*} 

Define 
\[
\Omega = \max_{p_k \in h_k[0],k\in V-\fbar[1]} \sqrt{\sum_{l=1}^d (\sum_{k\in V-\fbar[1]} \|p_k(l)\|)^2}
\]
Then, we have 
\begin{equation}
\label{e_o}
\ddd(q_i, q_j) \leq \left(1-\frac{1}{n}\right)^t \sqrt{\sum_{l=1}^d \left(\sum_{k\in V-\fbar[1]} \|p_k(l)\|\right)^2} ~\leq~ (1-\frac{1}{n})^t\,\Omega
\end{equation}

Because the $h_k[0]$'s in the definition of $\Omega$ are all valid (by
Lemma \ref{lemma:valid_initial_hull} in Appendix \ref{a_lemmas}),
$\Omega$ can itself be upper bounded by a function of the input vectors at the fault-free
processes.
In particular, under the assumption that each element of fault-free processes' input vectors is
upper bounded by $U$ and lower bounded by $\mu$, $\Omega$ is upper
bounded by $\sqrt{dn^2\max(U^2,\mu^2)}$.
Observe that the upper bound on the right-hand-side of (\ref{e_o})  monotonically decreases with $t$, because $1-\frac{1}{n}<1$.
Define $t_{end}$ as the smallest positive integer $t$
for which
\begin{equation}\left(1-\frac{1}{n}\right)^t\sqrt{dn^2\max(U^2,\mu^2)}<\epsilon \label{e_end2} \end{equation}
Recall that the algorithm terminates after $t_{end}$ rounds.
Since $t_{end}$ is finite, the algorithms satisfies the {\em termination}
condition.

(\ref{e_o}) and (\ref{e_end2}) together imply that, 
for fault-free processes $i,j$ and for each point $q_i\in h_i[t_{end}]$,
there exists a point $q_j[t]\in h_j[t_{end}]$, such that $\ddd(q_i,q_j)<\epsilon$ (and, similarly, vice-versa).
Thus, by Definition of Hausdorff distance, $\D(h_i[t_{end}],h_j[t_{end}]) <\epsilon$.
Since this holds true for any pair of fault-free processes
$i,j$, the $\epsilon$-agreement property is satisfied at termination.
\end{proof}

\comment{+++++++++++ replaced by Proof of Theorem 3 ++++++++
\section{Proof of Lemma \ref{lemma:alpha_t}}
\label{a_alpha_t}

++++++++ Definition of Matrix tools moved +++++++++

Now, we are ready to prove Lemma \ref{lemma:alpha_t}:

\noindent{\bf Lemma \ref{lemma:alpha_t}:} {\em
For any two fault-free processes $i,j\in V-F$, and for $1 \leq k \leq n$,
\begin{equation}
\| \bfM^*_{ik} -  \bfM^*_{jk}\| \leq \alpha^t
\end{equation}
}

\begin{proof}
First, observe that by the first claim in Lemma \ref{lemma:transition_matrix} and
Claim \ref{c_lambda_bound}, $\lambda(\matrixm[t])\leq 1-\frac{1}{n}<1$.

Then by Claim \ref{claim_zelta},
\begin{eqnarray}
\label{e_alpha}
\delta(\bfM^*)=
\delta(\Pi_{\tau=1}^t \bfM[\tau])
~ \leq ~ \lim_{t\rightarrow\infty} \Pi_{\tau=1}^{t} \lambda(\bfM[\tau]) 
 ~\leq~ {\left(1-\frac{1}{n}\right)}^t ~=~\alpha^t
\end{eqnarray}

Consider any two fault-free processes $i,j\in V-F$.
By (\ref{e_alpha}), $\delta(\bfM^*)\leq \alpha^t$. 
Therefore, by the definition of $\delta(\cdot)$, for $1 \leq k \leq n$, we have

\begin{equation}
\| \bfM^*_{ik} -  \bfM^*_{jk}\| \leq \alpha^t
\end{equation}
\end{proof}
+++++++++++++++++++++=}

\section{Proof of Lemma \ref{lemma:svSize}}
\label{a_l:svSize}

We first prove a claim that will be used in the proof of Lemma \ref{lemma:svSize}.
\begin{claim}
\label{claim:M*}
For $t \geq 1$, let $\bfP[t] = \Pi_{\tau=1}^t \bfM[\tau]$. Then, for all processes $j \in V - \fbar[t+1]$, and $k \in \fbar[1]$, $\bfP_{jk}[t] = 0$.
\end{claim}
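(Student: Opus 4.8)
The plan is to prove the claim by induction on $t$, exploiting the factorization $\bfP[t]=\bfM[t]\,\bfP[t-1]$ that follows from the backward product convention in (\ref{backward}). The underlying intuition is simple: a process $k\in\fbar[1]$ crashes before sending any round-$1$ message, hence (since $\fbar[1]\subseteq\fbar[\tau]$ for all $\tau\geq 1$) it never sends a message in any round, so no surviving process ever places positive weight on column $k$ through \emph{Rule 1}. The only entries that place weight on column $k$ are those dictated by \emph{Rule 2}, which govern rows indexed by already-crashed processes; those rows are irrelevant here, since the statement concerns rows $j\in V-\fbar[t+1]$.

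First I would record the basic fact that drives everything. For any round $\tau\geq 1$ and any $i\in V-\fbar[\tau+1]$, \emph{Rule 1} applies to row $i$ of $\bfM[\tau]$; since $k\in\fbar[1]\subseteq\fbar[\tau]$ sends no round-$\tau$ message, we have $(*,k,\tau)\notin\msg_i[\tau]$ and therefore $\bfM_{ik}[\tau]=0$. More generally, for a row $j\in V-\fbar[\tau+1]$, \emph{Rule 1} guarantees $\bfM_{jl}[\tau]\neq 0$ only if process $l$ sent a round-$\tau$ message appearing in $\msg_j[\tau]$; but a process that sends a round-$\tau$ message cannot have crashed before round $\tau$, so such an $l$ lies in $V-\fbar[\tau]$.

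For the base case $t=1$ we have $\bfP[1]=\bfM[1]$, and the fact above (with $\tau=1$, $i=j$) gives $\bfP_{jk}[1]=\bfM_{jk}[1]=0$ for every $j\in V-\fbar[2]$ and $k\in\fbar[1]$. For the inductive step, assume the claim for $t-1$, i.e.\ $\bfP_{lk}[t-1]=0$ for all $l\in V-\fbar[t]$ and all $k\in\fbar[1]$. Writing $\bfP[t]=\bfM[t]\,\bfP[t-1]$ and fixing $j\in V-\fbar[t+1]$ and $k\in\fbar[1]$, I would expand
\begin{equation*}
\bfP_{jk}[t]=\sum_{l=1}^{n}\bfM_{jl}[t]\,\bfP_{lk}[t-1]
\end{equation*}
and split the sum according to whether $l\in\fbar[t]$ or $l\in V-\fbar[t]$. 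For $l\in\fbar[t]$, the observation of the previous paragraph forces $\bfM_{jl}[t]=0$ (such an $l$ sends no round-$t$ message, so it never appears in $\msg_j[t]$). For $l\in V-\fbar[t]$, the induction hypothesis gives $\bfP_{lk}[t-1]=0$. Either way every summand vanishes, so $\bfP_{jk}[t]=0$, completing the induction.

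I expect the only real subtlety---and hence the point to state carefully---to be the interaction between the two rules: \emph{Rule 2} does make column $k$ nonzero in the rows belonging to $\fbar[\tau+1]$, so the column of $\bfP[t]$ is \emph{not} identically zero, and the argument must use repeatedly that the rows under consideration (the target row $j\in V-\fbar[t+1]$ and the intermediate indices $l\in V-\fbar[t]$) are governed by \emph{Rule 1}. The monotonicity $\fbar[1]\subseteq\fbar[2]\subseteq\cdots$ is exactly what makes the index bookkeeping in the inductive step line up, so the main work is merely to spell out this zero-pattern rather than to perform any computation.
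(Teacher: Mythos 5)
Your proof is correct and follows essentially the same route as the paper's own argument: induction on $t$ via the factorization $\bfP[t]=\bfM[t]\,\bfP[t-1]$, with \emph{Rule 1} forcing $\bfM_{jl}[t]=0$ for every $l\in\fbar[t]$ (no round-$t$ message from $l$ can appear in $\msg_j[t]$) and the induction hypothesis killing $\bfP_{lk}[t-1]$ for $l\in V-\fbar[t]$. If anything, your case split on $l\in\fbar[t]$ versus $l\in V-\fbar[t]$ is exhaustive and stated more cleanly than the paper's wording of the first case, which refers to $\fbar[t-1]$.
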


\begin{proof}
The claim is intuitively straightforward.
For completeness, we present a formal proof here.
The proof is by induction on $t$.

{\em Induction Basis}: Consider the case when $t = 1$, $j \in V - \fbar[2]$, and $k \in \fbar[1]$. Then by definition of $\fbar[1]$, $(*,k,0) \not\in \msg_j[1]$.
Then, due to (\ref{eq:matrix_i-2}),
$\bfM_{jk}[1]=0$, and hence $\bfP _{jk}[1]=\bfM_{jk}[1]=0$.

{\em Induction}: Consider $t \geq 2$. Assume that the claim holds true through round  $t - 1$. 
Then, $\bfP _{jk}[t-1]=0$ for all $j \in V - \fbar[t]$ and $k \in \fbar[1]$.
Recall that $\bfP [t-1] = \Pi_{\tau=1}^{t-1} \bfM[\tau]$.

Now, we will prove that the claim holds true for round $t$.
Consider $j \in V - \fbar[t+1]$
and $k\in \fbar[1]$.
Note that $\bfP [t] = \Pi_{\tau=1}^{t} \bfM[\tau] = \bfM[t] \Pi_{\tau=1}^{t-1} \bfM[\tau] = \bfM[t] \bfP [t-1]$.
Thus, $\bfP _{jk}[t]$ can be non-zero only if there exists a $q \in V$ such that $\bfM_{jq}[t]$ and $\bfP _{qk}[t-1]$
are both non-zero.

For any $q\in \fbar[t-1]$, $(*,q,t-1) \not\in \msg_j[t]$.
Then, due to (\ref{eq:matrix_i-2}), $\bfM_{jq}[t] = 0$ for all $q \in \fbar[t-1]$, and hence all $q\in \fbar[1]$ (note that $\fbar[r-1]\subseteq \fbar[r]$
for $r\geq 2$).
Additionally, by the induction hypothesis,
for all $q \in V - \fbar[t]$ and $k \in \fbar[1]$,
 $\bfP _{qk}[t-1] = 0$.
Thus, these two observations together imply that there does not exist any $q \in V$ such that
$\bfM_{jq}[t]$ and $\bfP _{qk}[t-1]$ are both non-zero.
Hence, $\bfP _{jk}[t]=0$.
\end{proof}

~

~

\noindent
Now, we are ready to prove Lemma \ref{lemma:svSize}.

~

\noindent{\bf Lemma \ref{lemma:svSize}:}
{\em
For all $i\in V-\fbar[t+1]$ and $t \geq 0$,
 $I_Z \subseteq h_i[t]$.
}

\begin{proof}
Recall that $Z$ and $I_Z$ are defined in (\ref{e_Z}) and (\ref{e_I_Z}), respectively.
We first prove that for all $j\in V-\fbar[1]$,
 $I_Z \subseteq h_j[0]$. We make the following observations for each process $i \in V-\fbar[1]$:

\begin{itemize}
%
\item {\em Observation 1}: By the definition of multiset $X_i$ at line 4
of round 0 at process $i$, and the definition of $X_Z$ in Section \ref{s_optimal}, we have $X_Z\subseteq X_i$.

\item {\em Observation 2}: Let $A$ and $B$ be sets of points in the $d$-dimensional space, where $|A|\geq n-f$, $|B| \geq n-f$ and $A \subseteq B$. Define $h_A := \cap_{\,C_A \subseteq A, |C_A| = |A| - f}~~\HH(C_A)$ and $h_B := \cap_{\,C_B \subseteq B, |C_B| = |B| - f}~~\HH(C_B)$. Then $h_A \subseteq h_B$. This observation follows directly from the fact that every multiset $C_A$ in the computation of $h_A$ is contained in some multiset $C_B$ used in the computation of $h_B$, and the property of $\HH$.
\end{itemize}

Now, consider the computation of $h_i[0]$ at line 5. By Observations 1 and 2,
and the definitions of $h_i[0]$ and $I_Z$, we have that  $I_Z \subseteq h_i[0]=\vectorv_i[0]$, where $i\in V-\fbar[1]$.
Also, by initialization step (I2) (in Section \ref{ss_correctness}), for $k\in \fbar[1]$, $\vectorv_k[0]=h_m[0]$, for
some fault-free process $m$. Thus, all the elements of $\vectorv[0]$
contain $I_Z$.
Then, due to row stochasticity of $\Pi_{\tau=1}^t\bfM[\tau]$,
it follows that each element of $\vectorv[t]=\left(\Pi_{\tau=1}^t\bfM[\tau\right)\,\vectorv[0]$ also contain $I_Z$.
Recalling that $h_i[t]=\vectorv_i[t]$ for each fault-free process,
proves the claim of the lemma.

\comment{+++++++++++++++++++++
\noindent
Now we make several observations for each fault-free process $i\in V-F$:
\begin{itemize}
\item As shown above, $I_Z\subseteq h_j[0]$ for all $j\in V-\fbar[1]$. 
\item
From (\ref{e_Mstar}), for $t\geq 1$,
\[
\vectorv[t]=\bfP[t] \vectorv[0]
\]
where $\vectorv_j[0]=h_j[0]$ for $j\in V-\fbar[1]$.
\item By Theorem \ref{t_M},  $\vectorv_i[t] = h_i[t]$.
\item Due to Claim \ref{claim:M*}, $\bfP_{ik}[t]=0$ for $k\in\fbar[1]$ (i.e., $k\not\in V-\fbar[1]$).
\item

$\bfM^*$ is the product of row stochastic matrices; therefore, $\bfM^*$ itself is also row stochastic.
Thus, for fault-free process $i$, $\vectorv_i[t]=h_i[t]$ is obtained
as the product of the $i$-th row of $\bfM^*$, namely $\bfM^*_i$, and $\vectorv[0]$:
this product yields 
a linear combination of the elements of $\vectorv[0]$, where the weights
are non-negative and add to 1 (because $\bfM^*_i$ is a stochastic row vector).

\item
From (\ref{e_r_c}), recall that $\bfM_i^*\vectorv[0]=\LL(\vectorv[0]^T~;~\bfM^*_i)$.
Function $\LL$ ignores the input polytopes for which the corresponding weight is 0.
Finally, from the previous observations, we have that when the weight in $\bfM^*[i]$ 
is non-zero, the corresponding polytope in 
$\vectorv[0]^T$ contains $I_Z$. 
Therefore, the linear combination also contains $I_Z$.
\end{itemize}
Thus,
$I_Z$ is contained in $h_i[t]=\vectorv_i[t]=\bfM^*_i\vectorv[0]$.
+++++++++++++++++++++++++}
\end{proof}

\section{Proof of Theorem \ref{thm:optSize}}
\label{a_t:optSize}

\noindent{\bf Theorem \ref{thm:optSize}:}
{\em
{\em Algorithm CC} is optimal under the notion of optimality in Section \ref{s_intro}.
}

\begin{proof}
Consider multiset $X_Z$ defined in Section \ref{s_optimal}.
Recall that $|X_Z|=|Z|$, and that $Z$ contains at least 
$n-f$ tuples. Thus, $X_Z$ contains at least $n-f$ points, and of these at least $n-2f$ points must be the inputs at fault-free processes.
Let $V_Z$ denote the set of fault-free processes whose inputs appear in $X_Z$.
Let $S=V-F-V_Z$. Since $|X_Z|\geq n-f$, $|S|\leq f$. 

Now consider the following execution of any algorithm ALGO that correctly solves approximate convex consensus. 
Suppose that the faulty processes in $F$ do not crash, but have an incorrect input. Consider the case when processes in $S$ are so slow that the other fault-free processes must terminate before receiving any messages from the processes in $S$. The fault-free processes in $V_Z$ cannot determine whether the processes in $S$ are just slow, or they have crashed.

Processes in $V_Z$ must be able to terminate without receiving any messages from the processes in $S$. Thus, their output must be in the convex
hull of inputs at the fault-free processes whose inputs are included in $X_Z$. However, any $f$ of the processes whose inputs are
in $X_Z$ may potentially be faulty and have incorrect inputs. Therefore, the output obtained by ALGO must be contained
in $I_Z$ as defined in Section \ref{s_optimal}. 
On the other hand, by Lemma \ref{lemma:svSize} in Appendix \ref{a_l:svSize}, the output obtained using Algorithm CC contains $I_Z$.
This proves the theorem.
\end{proof}

\section{Convex Function Optimization}
\label{a_optimization}

\subsection{Notion of Optimality}

We can extend the notion of optimality (of {\em convex consensus} algorithms) in Section \ref{s_intro} to {\em convex function optimization}
as follows. An algorithm $A$ for convex function optimization is said to be optimal if the following condition is true.
\begin{list}{}{}
\item[]
Let $F$ denote a set of up to $f$ faulty processes. For a {\bf given execution} of algorithm $A$ with $F$ being the set of faulty processes, let $y_i(A)$ denote the output at process $i$ at the end of the given execution. 
For any other algorithm $B$,
{\bf there exists} an execution with $F$ being the set of faulty processes, such that $y_i(B)$ is the output at fault-free process $i$, and $c(y_j(A)) \leq c(y_j(B))$ for {\bf each} fault-free process $j$.
\end{list}
The intuition behind the above formulation is as follows. A goal of function optimization here is to
allow the processes to ``learn'' the smallest value of the cost function over the convex hull of the inputs
at the fault-free processes. The above condition implies that an optimal algorithm will learn a function value
that is no larger than that learned in a worst-case execution of any other algorithm.

The 2-step convex function optimization algorithm, with the first step being convex consensus, as described
in Section \ref{s_optimization}, is optimal in the above sense. This is a direct consequence of
Theorem \ref{thm:optSize}.

\subsection{Impossibility Result}

The four properties for convex function optimization problem introduced in Section \ref{s_optimization} are:

\begin{itemize}
\item {\bf Validity}: 
output $y_i$ at fault-free process $i$ is a
point in the convex hull of the correct inputs.

\item \textbf{$\epsilon$-Agreement}: for a given constant $\epsilon > 0$,
for any fault-free processes
$i,j$, $\ddd(y_i,y_j)<\epsilon$.

\item \textbf{Weak $\beta$-Optimality}:
(i)
for any constant $\beta>0$, for any fault-free processes $i,j$,
$\| c(y_i)-c(y_j)\|<\beta$, and
(ii) if at least $2f+1$ processes (faulty or fault-free) have an identical input, say $x$,
then for any fault-free process $i$, $c(y_i) \leq c(x)$.

\item {\bf Termination:} each fault-free process must terminate within a finite amount of time.
\end{itemize}


The theorem below proves the impossibility of satisfying the above
properties for $n\geq 4f+1$ and $d\geq 1$. From our prior discussion,
we know that we need $n\geq (d+2)f+1$ even without the
weak $\beta$-optimality requirement. Thus, for $d\geq 2$,
the theorem implies that for $d\geq 2$ and any $n$, the above
properties cannot be satisfied. For the specific case of $d=1$,
we do not presently know whether the above properties can be
satisfied when $3f+1\leq n\leq 4f$.

\begin{theorem}
\label{thm:impossible}
All the four properties above cannot be satisfied simultaneously
in an asynchronous system in the presence of crash faults with
incorrect inputs for $n\geq 4f+1$ and $d\geq 1$.
\end{theorem}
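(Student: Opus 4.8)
The plan is to prove the theorem by reducing binary consensus to convex function optimization, and then invoking the impossibility of deterministic asynchronous consensus with a single crash fault \cite{FLP_one_crash,impossible_proof_lynch}. Because the crash faults with incorrect inputs model is strictly more adversarial than the ordinary crash model, any consensus algorithm for it is in particular a consensus algorithm for ordinary crashes (restrict the faulty processes to correct inputs), so the FLP impossibility applies here as well for $f\ge 1$. Suppose, for contradiction, that some algorithm $\mathsf{ALGO}$ satisfies Validity, $\epsilon$-Agreement, Weak $\beta$-Optimality, and Termination for some $n\ge 4f+1$ and $d\ge 1$. I will use $\mathsf{ALGO}$ as a black box to build a binary consensus algorithm $\mathsf{CON}$.

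First consider $d=1$. In $\mathsf{CON}$, a process $i$ holding consensus bit $b_i\in\{0,1\}$ runs $\mathsf{ALGO}$ with input $x_i=b_i$, cost function $c(y)=-(y-\tfrac12)^2$, and agreement parameter $\epsilon=\tfrac12$; it then decides $0$ if the returned point $y_i$ satisfies $y_i\le\tfrac12$, and decides $1$ otherwise. Note that $c$ is Lipschitz on $[0,1]$, so this is a legitimate instance (the formulation only demands Lipschitz continuity, not convexity). A faulty process runs $\mathsf{ALGO}$ faithfully with its possibly incorrect bit until it crashes, so the reduction preserves the fault model. The crucial feature of $c$ is that $c(0)=c(1)=-\tfrac14$ while $c$ is strictly larger in the interior of $[0,1]$, so driving the cost down to $-\tfrac14$ forces the output to an endpoint.

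I then verify that $\mathsf{CON}$ solves binary consensus. For \emph{agreement}: since $n\ge 4f+1$, every assignment of bits to the $n$ processes has, by the pigeonhole principle, at least $\lceil n/2\rceil\ge 2f+1$ processes sharing a common value $x^*\in\{0,1\}$. Applying Weak $\beta$-Optimality~(ii) with input $x^*$ yields $c(y_i)\le c(x^*)=-\tfrac14$ for every fault-free $i$, which combined with Validity ($y_i\in[0,1]$) forces $y_i\in\{0,1\}$; then $\epsilon$-Agreement with $\epsilon=\tfrac12<1$ forbids two fault-free outputs from landing on opposite endpoints, so all fault-free processes decide the same bit. For \emph{validity}: if all fault-free processes start with the same bit $b$, then every correct input equals $b$, so Validity alone pins $y_i=b$ and process $i$ decides $b$. \emph{Termination} is inherited from $\mathsf{ALGO}$ together with the local decision rule. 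For $d>1$ the same argument goes through by embedding, using $x_i=(b_i,0,\dots,0)$ and letting $c$ depend only on the first coordinate: Validity confines each $y_i$ to $[0,1]\times\{0\}^{d-1}$ and $\epsilon$-Agreement controls the first coordinate. Thus $\mathsf{CON}$ is a correct deterministic binary consensus algorithm tolerating one crash in an asynchronous system, contradicting \cite{FLP_one_crash,impossible_proof_lynch} and refuting the existence of $\mathsf{ALGO}$.

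The main obstacle --- and the reason the hypothesis is exactly $n\ge 4f+1$ --- is guaranteeing that Weak $\beta$-Optimality~(ii) can always be triggered: this needs $2f+1$ processes to share an input in \emph{every} configuration, which pigeonhole delivers precisely when $n\ge 4f+1$; for $3f+1\le n\le 4f$ with $d=1$ it can fail, matching the gap left open in the statement. The only real design choice is the cost function, which must place its minimizers over every valid interval at the candidate decision points $0$ and $1$ while satisfying $c(0)=c(1)$, so that the bound from~(ii) is the same regardless of which bit is in the majority; the concave bump $c(y)=-(y-\tfrac12)^2$ accomplishes both. I note that the argument invokes only Validity, $\epsilon$-Agreement, clause~(ii) of Weak $\beta$-Optimality, and Termination, so these four requirements are already jointly unsatisfiable.
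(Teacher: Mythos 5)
Your proposal is correct and takes essentially the same route as the paper's own proof: restrict inputs to $\{0,1\}$, use the pigeonhole bound $\lceil n/2\rceil \geq 2f+1$ (valid precisely because $n\geq 4f+1$) to trigger clause (ii) of weak $\beta$-optimality, combine it with validity to force every fault-free output to an endpoint of $[0,1]$, and then use $\epsilon$-agreement with $\epsilon<1$ to obtain exact consensus, contradicting the FLP impossibility. The only differences are cosmetic: the paper uses the globally Lipschitz capped cost $c(x)=4-(2x-1)^2$ on $[0,1]$ (constant $3$ outside) rather than your uncapped parabola $-(y-\tfrac12)^2$, and it handles $d>1$ by the same zero-padding embedding you describe, stated in one sentence.
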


\begin{proof}
We will prove the result for $d=1$. It should be obvious that impossibility
with $d=1$ implies impossibility for larger $d$ (since we can always choose
inputs that have 0 coordinates in all dimensions except one).

The proof is by contradiction. Suppose that there exists an algorithm,
say Algorithm $\sa$,
that achieves the above four properties for $n\geq 4f+1$ and $d=1$.

Let the cost function be given by $c(x) = 4-(2x-1)^2$ for $x\in [0,1]$
and $c(x)=3$ for $x\not\in[0,1]$. 
For future reference note that within the interval
$[0,1]$, function $c(x)$ has the smallest value at
$x=0,1$ both. 

Now suppose that all the inputs (correct and incorrect) are restricted to be binary, and must be 
0 or 1. (We will prove impossibility under this restriction on the inputs at faulty
and fault-free processes both,
which suffices to prove that the four properties cannot {\em always} be satisfied.)
Suppose that the output of Algorithm $\sa$ at fault-free process
$i$ is $y_i$. Due to the validity property, and because the inputs
are restricted to be 0 or 1, we know that $y_i\in [0,1]$.

Since $\lceil\frac{n}{2}\rceil\geq \lceil \frac{4f+1}{2} \rceil = 2f+1$,
at least $2f+1$ processes will have either input $0$, or input $1$. Without loss of generality,
suppose that at least $2f+1$ processes have input $0$.

Consider a fault-free process $i$. By weak $\beta$-Optimality, $c(y_i) \leq c(0)$,
that is, $c(y_i)\leq 3$. However, the minimum value of the cost function is 3 over
all possible inputs. Thus, $c(y_i)=3$.
Similarly, for any other fault-free process $j$ as well, $c(y_j)$ must equal 3. Now,
due to validity, $y_j\in [0,1]$, and the cost function is 3 in interval $[0,1]$ only
at $x=0,1$. Therefore, we must have $y_i$ equal to $0$ or $1$, and $y_j$ also equal to 0 or 1.
However, because algorithm $\sa$ satisfies the $\epsilon$-agreement condition,
$\ddd(y_i,y_j)=\| y_i-y_j\|<\epsilon$ (recall that dimension $d=1$). If
$\epsilon<1$, then $y_i$ and $y_j$ must be identical (because we already know
that they are either 0 or 1).
Since this condition holds for any pair of fault-free processes, it implies
{\em exact} consensus. Also, $y_i$ and $y_j$ will be equal to the input at a fault-free process
due to the validity property above, and because the inputs are restricted to be 0 or 1.
In other words, Algorithm $\sa$ can be used to solve exact consensus in the presence
of crash faults with incorrect inputs when $n\geq 4f+1$ in an asynchronous system.
This contradicts the well-known impossibility result by
Fischer, Lynch, and Paterson \cite{impossible_proof_lynch}.
\end{proof}

\end{document}